\def\full{}
\newcommand\footnoteref[1]{\protected@xdef\@thefnmark{\ref{#1}}\@footnotemark}
\definecolor{darkgreen}{rgb}{0,0.4,0}
\pgfplotsset{compat=1.16}
\pgfmathtruncatemacro{\NoOfColsOne}{\pgfplotsretval-1}
\pgfplotsset{compat=1.16}
\pgfmathtruncatemacro{\NoOfColsTwo}{\pgfplotsretval-1}
\tikzstyle{QKDmodule} = [rectangle, minimum width=1cm, minimum height=0.5cm, text centered, font=\scriptsize, draw=black]
\tikzstyle{KM} = [rectangle, minimum width=2cm, minimum height=0.5cm, text centered, font=\scriptsize, draw=black]
\tikzstyle{QKDN} = [rectangle, minimum width=10cm, minimum height=0.5cm, text centered, font=\scriptsize, draw=black]
\tikzstyle{arrowtwo} = [<->,>=stealth]
\tikzstyle{arrow} = [->,>=stealth]
\tikzstyle{node} = [rectangle, minimum width=2.2cm, minimum height=0.5cm, text centered, font=\scriptsize, draw=black!60, dashed]
\tikzstyle{app} = [rectangle, minimum width=2cm, minimum height=1cm, text centered, font=\scriptsize, draw=black]
\tikzstyle{desc} = [text centered, font=\scriptsize]
\tikzstyle{layer} = [rectangle ,minimum width=2cm, minimum height=0.5cm, align=right, font=\scriptsize]
\crefname{experiment}{experiment}{experiments}
\Crefname{experiment}{Experiment}{Experiments}
\newcommand{\nist}{{\tt NIST}}
\newcommand{\qkd}{{\tt QKD}}
\newcommand{\pqc}{{\tt PQC}}
\newcommand{\INDCPA}{\ensuremath{\mathsf{ind}\text{-}\allowbreak\mathsf{cpa}}}
\newcommand{\INDCCA}{\ensuremath{\mathsf{ind}\text{-}\allowbreak\mathsf{cca}}}
\newcommand{\INDONECCA}{\ensuremath{\mathsf{ind}\text{-}\allowbreak\mathsf{1cca}}}
\newcommand{\EUFCMA}{\ensuremath{\mathsf{EUF}\text{-}\allowbreak\mathsf{CMA}}}
\newcommand{\AdvEUFCMA}[1]{\ensuremath{\mathsf{Adv}^{\mathsf{euf\text{-}cma}}_{#1}}}
\newcommand{\adv}{\advA}
\newcommand{\Adv}{\mathsf{Adv}}
\newcommand{\mathcmd}[1]{\ensuremath{#1}\xspace} 
\newcommand{\N}{\mathcmd{\mathbb{N}}}
\newcommand{\setS}{\mathcmd{\mathcal{S}}}
\newcommand{\advA}{\ensuremath{\mathcal{A}}\xspace}
\newcommand{\advB}{\ensuremath{\mathcal{B}}\xspace}
\newcommand{\advC}{\ensuremath{\mathcal{C}}\xspace}
\newcommand{\advD}{\ensuremath{\mathcal{D}}\xspace}
\newcommand{\advE}{\ensuremath{\mathcal{E}}\xspace}
\newcommand{\getsr}{\xleftarrow{R}}
\newcommand{\pk}{\mathcmd{\mathsf{pk}}} 
\newcommand{\sk}{\mathcmd{\mathsf{sk}}} 
\newcommand{\kk}{\secpar} 
\newcommand{\msg}{\mathcmd{m}} 
\newcommand{\Msg}{\mathcmd{\mathcal{M}}} 
\newcommand{\key}{\mathcmd{K}} 
\newcommand{\ctxt}{\mathcmd{c}} 
\newcommand{\ExpEUFCMA}[1]{\ensuremath{\mathsf{Exp}^{\mathsf{euf\text{-}cma}}_{#1}}}
\newcommand{\prfeval}{\ensuremath{\mathcal{F}}}
\renewcommand{\secpar}{\kappa}
\newcommand{\kgen}{{\mathsf{KGen}}}
\newcommand{\GetKey}{{\mathsf{GetKey}}}
\newcommand{\ssign}{\mathsf{Sign}}
\newcommand{\sverify}{\mathsf{Ver}}
\newcommand{\Key}{\mathcmd{\mathcal{K}}}
\newcommand{\KEM}{\mathcmd{\mathsf{KEM}}} 
\newcommand{\KEMINDCPA}{IND-CPA\xspace}
\newcommand{\KEMINDCCA}{IND-CCA\xspace}
\newcommand{\ExpKEMINDT}[1]{\ensuremath{\mathsf{Exp}^{\mathsf{ind\text{-}}T}_{#1}}}
\newcommand{\ExpKEMINDCPA}[1]{\ensuremath{\mathsf{Exp}^{\mathsf{ind\text{-}cpa}}_{#1}}}
\newcommand{\AdvKEMINDCPA}[1]{\ensuremath{\mathsf{Adv}^{\mathsf{ind\text{-}cpa}}_{#1}}}
\newcommand{\ExpKEMINDCCA}[1]{\ensuremath{\mathsf{Exp}^{\mathsf{ind\text{-}cca}}_{#1}}}
\newcommand{\AdvKEMINDCCA}[1]{\ensuremath{\mathsf{Adv}^{\mathsf{ind\text{-}cca}}_{#1}}}
\newcommand{\Encaps}{\mathcmd{\mathsf{Enc}}} 
\newcommand{\Decaps}{\mathcmd{\mathsf{Dec}}} 
\newcommand{\MAC}{\mathcmd{\mathsf{MAC}}}
\newcommand{\KDF}{\mathcmd{\mathsf{KDF}}}
\newcommand{\msign}{\mathsf{Auth}}
\newcommand{\mverify}{\mathsf{Ver}}
\newcommand{\dPRF}{\mathcmd{\mathsf{dual-PRF}}}
\newcommand{\IHTS}{\mathcmd{\mathsf{IHTS}}}
\newcommand{\RHTS}{\mathcmd{\mathsf{RHTS}}}
\newcommand{\dHS}{\mathcmd{\mathsf{dHS}}}
\newcommand{\AHS}{\mathcmd{\mathsf{AHS}}}
\newcommand{\IAHTS}{\mathcmd{\mathsf{IAHTS}}}
\newcommand{\RAHTS}{\mathcmd{\mathsf{RAHTS}}}
\newcommand{\dAHS}{\mathcmd{\mathsf{dAHS}}}
\newcommand{\MS}{\mathcmd{\mathsf{MS}}}
\newcommand{\IF}{\mathcmd{\mathsf{IF}}}
\newcommand{\RF}{\mathcmd{\mathsf{RF}}}
\newcommand{\IATS}{\mathcmd{\mathsf{IATS}}}
\newcommand{\RATS}{\mathcmd{\mathsf{RATS}}}
\newcommand{\fk}{\mathcmd{\mathsf{fk}}}
\newcommand{\Muckle}{\mathcmd{\text{Muckle}}}
\newcommand{\SigMuckle}{\mathcmd{\text{Muckle}+}}
\newcommand{\KEMuckle}{\mathcmd{\text{Muckle}\#}}
\newcommand{\MuckleSharp}{\KEMuckle}
\newcommand{\lbl}{\mathcmd{\ell}}
\newcommand{\SecState}{\mathcmd{\mathsf{SecState}}}
\newcommand{\KEMTLS}{\mathcmd{\text{KEMTLS}}}
\newcommand{\clean}{\mathcmd{\mathsf{clean}}}
\newcommand{\Corrupt}{\mathcmd{\mathsf{Corrupt}}}
\newcommand{\CorruptSK}{\mathcmd{\mathsf{CorruptSK}}}
\newcommand{\CorruptQK}{\mathcmd{\mathsf{CorruptQK}}}
\newcommand{\CorruptCK}{\mathcmd{\mathsf{CorruptCK}}}
\newcommand{\Compromise}{\mathcmd{\mathsf{Compromise}}}
\newcommand{\CompromiseSS}{\mathcmd{\mathsf{CompromiseSS}}}
\newcommand{\CompromiseSK}{\mathcmd{\mathsf{CompromiseSK}}}
\newcommand{\CompromiseQK}{\mathcmd{\mathsf{CompromiseQK}}}
\newcommand{\CompromiseCK}{\mathcmd{\mathsf{CompromiseCK}}}
\newcommand{\Test}{\mathcmd{\mathsf{Test}}}
\newcommand{\Create}{\mathcmd{\mathsf{Create}}}
\newcommand{\Reveal}{\mathcmd{\mathsf{Reveal}}}
\newcommand{\Send}{\mathcmd{\mathsf{Send}}}
\newcommand{\act}{\mathcmd{\mathsf{active}}}
\newcommand{\acc}{\mathcmd{\mathsf{accept}}}
\newcommand{\rej}{\mathcmd{\mathsf{reject}}}
\title{Quantum-Safe Hybrid Key Exchanges with KEM-Based Authentication}
\author{Christopher Battarbee\inst{1} \and Christoph Striecks\inst{2} \and Ludovic Perret\inst{1,4} \and Sebastian Ramacher\inst{2} \and Kevin Verhaeghe\inst{3}\thanks{Work done while with AIT.}}
\institute{Sorbonne University, CNRS, LIP6
F-75005 Paris, France\email{} \and AIT Austrian Institute of Technology, Vienna, Austria\email{} \and ETH Zurich, Zurich, Switzerland\email{}
\and 
Laboratoire de Recherche de l’EPITA,
Le Kremlin-Bicêtre, France
}
\begin{document}

\maketitle

\begin{abstract} Authenticated Key Exchange (AKE) between any two entities is one of the most important security protocols available for securing our digital networks and infrastructures. In PQCrypto 2023, Bruckner, Ramacher and Striecks proposed a novel hybrid AKE (HAKE) protocol dubbed $\SigMuckle$ that is particularly useful in large quantum-safe networks consisting of a large number of nodes. Their protocol is hybrid in the sense that it allows key material from conventional, post-quantum, and quantum cryptography primitives to be incorporated into a single end-to-end authenticated shared key.

To achieve the desired authentication properties, $\SigMuckle$ utilizes post-quantum digital signatures. However, available instantiations of such signatures schemes are not yet efficient enough compared to their post-quantum key-encapsulation mechanism (KEM) counterparts, particularly in large networks with potentially several connections in a short period of time.

To mitigate this gap, we propose $\MuckleSharp$ that pushes the efficiency boundaries of currently known HAKE constructions. $\MuckleSharp$ uses post-quantum key-encapsulating mechanisms for implicit authentication inspired by recent works done in the area of Transport Layer Security (TLS) protocols, particularly, in KEMTLS (CCS'20). 

We port those ideas to the HAKE framework and develop novel proof techniques on the way. Due to our KEM-based approach, the resulting protocol has a slightly different message flow compared to prior work that we carefully align with the HAKE framework and which makes our changes to $\SigMuckle$ non-trivial.

\ifdefined\full
\else
Lastly, we evaluate the approach by a prototypical implementation and a direct comparison with $\SigMuckle$ to highlight the efficiency gains.
\fi

\keywords{hybrid authenticated key exchange, post-quantum cryptography, quantum cryptography}
\end{abstract}

\section{Introduction}

The continuous progress of quantum technologies is making the deployment of  global quantum-safe infrastructures an increasingly pressing matter. Perhaps most visibly in this direction, the first post-quantum cryptographic standards were formally published by the National Institute of Standards and Technology (\nist{})\footnote{\url{https://csrc.nist.gov/projects/post-quantum-cryptography}} in $2024$, following a multi-year and massive effort from a large community of researchers. Indeed, these quantum-resistant, or ``post-quantum'', technologies are entering a stage of maturity whereby their deployment into existing network architectures is a growing field of study.

In practice, the new post-quantum standards are expected to be deployed in tandem with existing techniques from the field of \textit{Authenticated} Key Exchanges (AKEs)~\cite{C:Maurer92,STOC:BelRog95}. In a public network of peers, for example, all the security guarantees of a post-quantum key encapsulation mechanism ($\KEM$) are void if one party can impersonate the other, so we seek to guarantee that any two parties in the network can verify each other's authenticity -- such a guarantee is called end-to-end authenticity. In an AKE protocol, we seek to establish a session key while achieving end-to-end authenticity.

\paragraph{Motivation.} There are several techniques by which one can achieve end-to-end authenticity, including using some pre-shared keying material (PSK). However, it is a well-known result that in a network of $n$ peers one requires $\mathcal{O}(n^2)$ initial PSK distributions, so this type of authentication might not be suitable in a large or dynamic network scenario, particularly if we wish to be able to add new peers to the network.

In such a scenario, it may be more appropriate to use certificate-based authentication. Here, we suppose that the authentication is based on asymmetric cryptography, and some third-party, trusted certificate authority (CA) guarantees the authenticity of each peer's public keys. In turn, these asymmetric cryptography protocols are used, in some capacity, during an AKE protocol, establishing the authenticity of each party. 

Typically, a digital signature scheme is utilized. One of the more famous examples of an AKE protocol is the Transport Layer Security (TLS) 1.3~\cite{DBLP:journals/rfc/rfc8446} protocol where two parties run an  unauthenticated ephemeral Diffie-Hellman-style handshake, and then sign various parts of the transcript, yielding authenticity. Several projects and initiatives are working on migrating network protocols such as TLS 1.3 to the post-quantum setting, both for KEMs and signatures, and evaluating their efficiency.\footnote{E.g., \url{https://blog.cloudflare.com/pq-2024/}, \url{https://www.microsoft.com/en-us/research/project/post-quantum-tls/}, \url{https://security.googleblog.com/2024/08/post-quantum-cryptography-standards.html}} The currently available post-quantum standards~\cite{NISTPQC:CRYSTALS-KYBER22,NISTPQC:CRYSTALS-DILITHIUM22,NISTPQC:FALCON22,NISTPQC:SPHINCS+22} are such that the $\KEM$s, in contrast to the pre-quantum setting, are more efficient than the currently available post-quantum signatures. In the post-quantum setting, therefore, it is tempting to also consider using $\KEM$s as an authentication mechanism; particularly following the seminal approach taken by Schwabe, Stebila, and Wiggers on $\KEMTLS$ in \cite{CCS:SchSteWig20} (which itself is based on the OPTLS proposal by Krawczyk and Wee \cite{7467348}).

\paragraph{Towards defense-in-depth approaches.} In parallel to the post-quantum standardization effort, the coming quantum threat has also motivated the deployment of large \qkd{}-based testbeds; notably in China\footnote{\url{https://physicsworld.com/a/quantum-cryptography-network-spans-4600-km-in-china/}}, in the EU with the EuroQCI network\footnote{\url{https://digital-strategy.ec.europa.eu/en/policies/european-quantum-communication-infrastructure-euroqci}} and the various testbeds distributed throughout Europe~\cite{DBLP:conf/icton/RaddoRLOM19,DBLP:conf/icton/MartinBOBVSSACSSEDRPL23,DBLP:journals/entropy/BrauerVBMBGRBFPPLMB24}, and in the UK with the Quantum Hub\footnote{\url{https://uknqt.ukri.org/success-stories/uk-quantum-networks/}}.
Whilst the first motivation for such testbeds was arguably to increase the maturity of \qkd{} devices, the next step is to move towards a practical \qkd{}-based network in real-life use-cases. This leads to the problem of integrating \qkd{} with existing security protocols, and to the study of the combination (or, hybridization) of \qkd{} with modern cryptography. 
Particularly, the European Commission recommends \pqc{} in hybridization with currently deployed cryptographic primitives or \qkd{}.\footnote{\url{https://digital-strategy.ec.europa.eu/en/news/commission-publishes-recommendation-post-quantum-cryptography}} 

Recently, hybrid AKE (HAKE) \cite{PQCRYPTO:DowHanPat20,PQCRYPTO:BruRamStr23} has emerged as a so-called ``defense-in-depth'' solution to these problems, incorporating key material from conventional and \pqc{} primitives, as well as from \qkd{}. The overall goal is resilience: despite the intensive research into their security, the post-quantum standards are young algorithms whose failure is not unprecedented; but the promised unconditional security of $\qkd{}$ is caveated by the relatively immature devices on which current $\qkd{}$  is implemented. One aims, therefore, to utilize multiple sources of key material, in such a way that if all but one of these key-material sources fail, the resulting shared key between any two parties is still authentic and confidential.  

In a different context, $\KEM$ combiners \cite{PKC:GiaHeuPoe18} were proposed to enhance the security of combined key material from different $\KEM$s on the primitive level. On the protocol level, we want to achieve more in terms of authentication and forward security which makes HAKE approaches the most suitable ones.

\paragraph{Forward security.} As well as tolerating ``real-time'' faults, protocols in the HAKE setting naturally have a higher degree of forward security, in the following sense. 
Forward security is an essential security features in nowadays protocols as it guarantees the security of past protocol sessions even in case of key leakage in the current session. 
Such features has been demonstrated to be of significant interest in interactive key-exchange protocols~\cite{EC:Gunther89a,DBLP:journals/dcc/DiffieOW92,CANS:DDGHJKNRW20,EC:RosSlaStr23}, public-key encryption \cite{EC:CanHalKat03,EPRINT:Groth21}, digital signatures \cite{C:BelMin99,USENIX:DGNW20}, search on encrypted data \cite{CCS:BosMinOhr17}, 0-RTT key exchange \cite{EC:GHJL17,EC:DJSS18,AC:CRSS20,DBLP:journals/joc/DerlerGJSS21}, updatable cryptography \cite{TCC:SlaStr23}, mobile Cloud backups \cite{DBLP:conf/osdi/DautermanCM20}, proxy cryptography \cite{PKC:DKLRSS18}, Tor \cite{PoPETS:LGMHS20}, and content-delivery networks \cite{FC:DRSS21}, among others.

In our AKE setting, consider the security model in which the security proof of TLS 1.3 is executed \cite{JC:DFGS21}; one does not model for the scenario in which the ephemeral secret established by a Diffie-Hellman~\cite{DifHel76} handshake is compromised after the session has accepted. Of course, such a compromise would completely reveal all secrets derived by that session of TLS 1.3, so there is a sense in which the secret established by the ephemeral Diffie-Hellman handshake has to remain secret ``forever.'' Obviously this is not realistic in a post-quantum setting; the problem can in part be handled by updating authenticated key-exchange mechanisms to use post-quantum ephemeral handshakes (probably via a post-quantum $\KEM$). On the other hand, a scenario in which all confidence is rapidly lost in a highly-regarded $\KEM$ candidate is not unprecedented, as seen with the SIDH break~\cite{EC:CasDec23,EC:Robert23,maino2023direct}. 

Switching to the HAKE model improves forward security here, because as well as mixing three sources of secrets in such a way that any two can fail at the same time, the \qkd{} link is resistant to the store-now-decrypt-later attacks relevant to both classical and post-quantum $\KEM$s used for the ephemeral confidentiality part (whereby assuming of course working mechanisms for end-to-end authentication). This is essentially a consequence of the no-cloning theorem; one cannot directly copy the quantum states being exchanged. Thus, to learn any information about the secret being exchanged over a \qkd{} link, one can only measure before the protocol is complete, and there is no ``transcript'' to attack passively after that point.

\paragraph{The HAKE framework.} A HAKE framework is put forward in the pioneering work due to Dowling, Brandt Hansen, and Paterson \cite{PQCRYPTO:DowHanPat20}, in which the $\Muckle$ protocol is proposed. $\Muckle$ is TLS-like, in the sense that unauthenticated handshakes are retrospectively authenticated when the transcript is authenticated. However, the authors argue that the use-cases they target are such that a PSK infrastructure is appropriate to achieve end-to-end authentication, and so, as we have discussed, their protocol is not an appropriate choice for a large-scale and dynamic network. As an enhancement, Bruckner, Ramacher and Striecks \cite{PQCRYPTO:BruRamStr23} proposed $\SigMuckle$, to efficiently derive an authenticated shared key even in large-scale quantum-safe networks. $\SigMuckle$ utilizes post-quantum signatures as the authentication mechanism.
As we have discussed, currently known post-quantum signature schemes are not as efficient as post-quantum $\KEM$s. Bruckner et al. already recognized such a property and left it as open problem to come up with more efficient authentication within HAKE.

In this work, we address this gap, and propose $\MuckleSharp$, a protocol that solely utilizes post-quantum $\KEM$s for authentication. In contrast with the signature case, only implicit authentication can be achieved directly with $\KEM$s, and further rounds of message-authentication code ($\MAC$) tags on the transcript of the protocol must be exchanged to yield explicit authentication. We provide a security proof of $\MuckleSharp$ within the HAKE framework\ifdefined\full.\else, and report a measurable improvement in efficiency compared to the performance of $\SigMuckle$.\fi

\paragraph{Related work.} The two works most closely related to this paper are \cite{PQCRYPTO:DowHanPat20} and \cite{PQCRYPTO:BruRamStr23}, in which the $\Muckle$ and $\SigMuckle$ protocols are put forward, respectively. Crucially, in \cite{PQCRYPTO:DowHanPat20}, the HAKE security model is defined, and it is in this model that the security of $\Muckle$, $\SigMuckle$ and $\MuckleSharp$ is proved. This model is a natural, hybrid-setting successor to the Bellare-Rogaway-style AKE models first put forward in \cite{bellare1993entity}.

Our protocol can be thought of as a synthesis of the $\SigMuckle$ protocol and the $\KEMTLS$ protocol defined in \cite{CCS:SchSteWig20}. This latter work, in turn, is an improvement on an earlier attempt to achieve signature-free AKE \cite{7467348}. We note also that, on the technical side, we make heavy use of the arguments in the security proof of the $\KEMTLS$ protocol, in particular relying on the standard identical-until-bad techniques of \cite{bellare2004code}.

There are various papers in the literature which seek to combine \pqc{} and \qkd{}. Mosca et al. \cite{mosca2013quantum} define a protocol in which the authentication channel required for \qkd{} is achieved with a post-quantum signature. They prove the security of this protocol in a different hybrid-setting AKE model, which is set-up to deal with quantum information (rather than treating the \qkd{} link as a black-box, as in our case). The protocol does not mix different secret sources, as in the $\Muckle$-related protocols. An experimental implementation of this kind of approach was achieved by Wang et al. in \cite{wang2021experimental}. Closer to the spirit of our own work, a recent paper proposes the ``Muckle++'' protocol \cite{garms2024experimental}, which offers similar large-network flexibility to $\SigMuckle$ by including a signature authentication mode. A security proof is not provided, but the authors achieve an experimental implementation of their protocol on commercial hardware.

\ifdefined\full
\paragraph{Contribution.} Our contribution can be summarized as follows. We carefully construct a HAKE protocol, dubbed $\KEMuckle$, in the framework of HAKE that uses KEM-based authentication instead of a signature-based one. Along the way, we propose adapted proof steps due to absence of digital signatures for authentication compared to $\SigMuckle$. Our new protocol has a slightly different message flow that we carefully align with the HAKE framework and which makes our changes to the $\SigMuckle$ non-trivial.
\else
\subsection{Contribution}
Our contribution can be summarized as follows:
\begin{itemize}
  \item We carefully construct a HAKE protocol, dubbed $\KEMuckle$, in the framework of HAKE that uses KEM-based authentication instead of a signature-based one. Along the way, we propose adapted proof steps due to absence of digital signatures for authentication compared to $\SigMuckle$. Our new protocol has a slightly different message flow that we carefully align with the HAKE framework and which makes our changes to the $\SigMuckle$ non-trivial.
  \item We implement $\KEMuckle$ in Python to evaluate its performance and bandwidth characteristics and to compare it against $\SigMuckle$. Our analysis shows that $\KEMuckle$ -- despite the additional communication cost -- benefits from the reduced computational complexity of the post-quantum secure KEMs compared to signature.
\end{itemize}
\fi

\subsection{More on the Technical Details}

In the following, we discuss the technical details related to the security proof, modes of authentication, and modeling the \qkd{} link in a more depth.

\paragraph{Security proof.} We prove the security of $\MuckleSharp$ within the HAKE model, making repeated use of the arguments put forward in the security proof of $\KEMTLS$. There are two main obstacles to overcome: fitting the $\KEMTLS$ arguments within a HAKE framework, and addressing the gap between implicit and explicit authentication that is introduced when one uses a $\KEM$ to authenticate. In \Cref{sec:kemuckle-sec}, we discuss how these technicalities are addressed.

\paragraph{Modes of authentication.} We briefly explain the difference in the authentication mechanisms of $\Muckle$, $\SigMuckle$ and $\MuckleSharp$. First, note that all three of those protocols begin with an unauthenticated exchange of ephemeral secrets; first of a classical $\KEM$, then of a post-quantum $\KEM$, and finally the exchange of a secret via a \qkd{} protocol. Eventually, one uses a series of pseudo-random function evaluations to derive a handshake secret. The authentication mechanisms then differ thus:
\begin{itemize}
    \item In the $\Muckle$ protocol it is assumed that any two parties in the network have already established a pre-shared key. They can then use this key to compute $\MAC$s on the transcript of the values exchanged, thereby authenticating the handshake secret.
    \item In the $\SigMuckle$ protocol, after the handshake secret is derived, the parties send (encrypted) certificates, which contain static, authenticated signing keys for a post-quantum signature scheme. Signatures of the transcript of the protocol are then computed with respect to these signing keys. After a final exchange of $\MAC$s (as is standard in such cases; see \cite{krawczyk2003sigma}), the handshake secret is authenticated.
    \item In our proposed $\KEMuckle$, after the handshake secret is defined, the same exchange of certificates takes place. This time, however, the certificates contain static, authenticated encapsulation keys for a post-quantum $\KEM$. Each entity then encapsulates an additional secret to their partner's long term encapsulation key, and upon decapsulation the resulting secret is, again via the application of pseudo-random function evaluations, ``baked in'' to the key schedule. Here, the authenticity is intuitively a result of the authenticity of the encapsulation key; since only the entity with the corresponding decapsulation key could recover the appropriate secret and adds this to the key scheduling. Crucially, only \textit{implicit} authentication is achieved here, and a final $\MAC$ exchange is required to make this authentication explicit.
\end{itemize}

\paragraph{Modelling the QKD link.}\label{sec:modelling-qkd}
As with the predecessors, $\Muckle$~\cite{PQCRYPTO:DowHanPat20} and $\SigMuckle$~\cite{PQCRYPTO:BruRamStr23}, we model the \qkd{} link out-of-band. More specifically, each entity has a black-box $\GetKey$ that gives them a key value $k_q$, and the mechanism by which this is achieved is abstracted to some $\qkd{}$ protocol running in the background. We model the potential failure of this \qkd{} link by giving the adversary access to a $\CorruptQK$ query, that reveals the key $k_q$.

Now, all current \qkd{} protocols require some form of auxiliary, symmetrically authenticated channel for their unconditional security to hold. Part of the motivation for moving away from the symmetric authentication of $\Muckle$, to public-key methods, was to remove the reliance on pre-shared key networks, which scale inherently badly. As such, there is some tension between the modeling assumptions we have made about the \qkd{} link, and the potential use cases of $\SigMuckle$ and $\MuckleSharp$ in more dynamic networks.

This gap is not an easy one to address. One of the reasons to model \qkd{} out-of-band is to avoid invoking the quantum-physical type arguments often seen in contemporary security proofs of \qkd{} protocols, and indeed, any method of addressing the gap will likely be required to feature such analysis. We consider this out of scope for our work, whose main purpose is to demonstrate efficiency gains with respect to $\SigMuckle$, while retaining provable security within the same framework. 

\paragraph{KEM-Based Authentication and Public-Key Infrastructures.}\label{par:pki} Available Public-Key Infrastructures (PKIs) will not be applicable in our setting due to certifying long-term signatures keys only whereas in \MuckleSharp we would need certifactes that include long-term encapsulation keys. The issue appears already in KEM-TLS which limits the use of such a protocol for Internet usage. 

However, since we are in a different setting and particularly motivated by the EuroQCI and related QKD-based networks, a novel PKI has to be developed alongside anyways to cope with the novel requirements in such quantum-safe networks (where such a new PKI could incorporate certifying long-term KEM keys as well in that case). Hence, we strongly believe that our \MuckleSharp protocol would definitely be applicable in such anticipated networks.

\section{Preliminaries}
In this section, we briefly recall notions related to (hybrid) authenticated key exchanges.

\paragraph{Notation.} 
    Let $\secpar \in \N$ be the security parameter. For a finite set $\setS$, we denote by $s\gets\setS$ the process of sampling an element $s$ uniformly from $\setS$. For an algorithm $A$, we let $y \gets A(\secpar,x)$ denote the process of running $A$ on input $(\secpar,x)$ with access to uniformly random coins, and assigning the result to $y$. (We may omit explicit mention of the $\secpar$-input and assume that all algorithms take $\secpar$ as input.) We say an algorithm $A$ is probabilistic polynomial time (PPT) if the running time of $A$ is polynomial in $\secpar$ by a probabilistic Turing machine. An algorithm $A$ is called quantum polynomial time (QPT) if it is a uniform family of quantum circuits with size polynomial in $\secpar$. A function $f$ is called negligible if its absolute value is smaller than the inverse of any polynomial, for sufficiently large input values (i.e., if $\forall c\in\N\ \exists k_0\ \forall \kk \geq k_0:|f(\kk)|<1/ \kk^c$). 

\subsection{Cryptographic Primitives and Schemes}
\begin{definition}[Pseudo-Random Function]
  Let $\prfeval\colon \mathcal{S} \times D \to \mathsf{R}$ be a family of functions and let $\Gamma$ be the set of all functions $D \rightarrow \mathsf{R}$. For a PPT distinguisher $\mathcal{D}$ we define the advantage function as
  \[
    \Adv^{\mathsf{PRF}}_{\mathcal{D},\prfeval}(\secpar) = \left| \Pr\left[s \getsr \mathcal{S}: {\cal D}^{\prfeval(s, \cdot)}(1^\secpar) = 1\right] - \Pr\left[f\getsr \Gamma: {\cal D}^{f(\cdot)}(1^\secpar) = 1\right] \right|
    \text{.}
  \]
  $\prfeval$ is a pseudorandom function (family) if it is efficiently computable and for all PPT distinguishers $\mathcal{D}$ there exists a negligible function $\varepsilon(\cdot)$ such that
  \[
    \Adv^{\mathsf{PRF}}_{\mathcal{D},\prfeval}(\secpar) \leq \varepsilon(\secpar)
    \text{.}
  \]
\end{definition}
A PRF $\prfeval$ is a dual PRF~\cite{EPRINT:BelLys15}, if $\mathcal{G}: D \times \mathcal{S} \to \mathsf{R}$ defined as $\mathcal{G}(d, s) = \prfeval(s, d)$ is also a PRF.

We recall the notion of message authentication codes (MACs) as well as digital signature schemes, and the standard unforgeability notions below.
\begin{definition}[Message Authentication Codes] \label{def:macs}
  A message authentication code $\MAC$ is a triple $(\kgen,\allowbreak \ssign, \sverify)$ of PPT algorithms, which are defined as:
  \begin{description}
    \item[$\kgen(1^\secpar)\colon$] This algorithm takes a security parameter $\secpar$ as input and outputs a secret key $\sk$.
    \item[$\msign(\sk, \msg)\colon$] This algorithm takes a secret key $\sk \in \Key$ and a $\msg \in \Msg$, and outputs an authentication tag $\tau$.
    \item[$\mverify(\sk,\msg,\tau)\colon$] This algorithm takes a secret key $\sk$, a message $\msg \in \Msg$, and an authentication tag $\tau$ as input, and outputs a bit $b \in \{0,1\}$.
  \end{description}
\end{definition}
A MAC is correct if for all $\secpar \in \N$, for all $\sk \gets \kgen(1^\secpar)$ and for all $\msg \in \Msg$, it holds that
\[
  \Pr\left[ \mverify(\sk, \msg, \allowbreak\msign(\sk,\msg))=1 \right] =1
  \text{,}
\] where the probability is taken over the random coins of $\kgen$ and $\msign$.
\begin{definition}[$\EUFCMA$ security of \MAC] For a PPT adversary $\advA$, we define the advantage function in the sense of existential unforgeability under chosen message attacks ($\EUFCMA$) as
  \[
    \AdvEUFCMA{\advA,\MAC}(1^\secpar) = \Pr\left[ \ExpEUFCMA{\advA,\MAC}(1^\secpar) = 1\right]
    \text{,}
  \]
  where the corresponding experiment is depicted in \Cref{fig:mac-unfcma}. If for all PPT adversaries $\adv$ there is a negligible function $\varepsilon(\cdot)$ such that
  \(
    \AdvEUFCMA{\advA,\MAC}(1^\secpar) \leq \varepsilon(\secpar)
    \text{,}
  \)
  we say that $\MAC$ is $\EUFCMA$ secure.
\end{definition}
\begin{experiment}[ht]
\centering
\pseudocode[mode=text]{
$\ExpEUFCMA{\advA,\MAC}(1^\secpar)$: \\
\t $\sk \gets \kgen(1^\secpar)$, $\mathcal{Q} \gets \emptyset$\\
\t $(\msg^*, \tau^*)\gets\advA^{\msign',\mverify'}(1^\secpar)$\\
\t \t where oracle $\msign'(\msg)$:\\
\t \t \t $\mathcal{Q} \gets \mathcal{Q} \cup \{\msg \}$\\
\t \t \t return $\msign(\sk, \msg)$\\
\t \t where oracle $\mverify'(\msg, \tau)$:\\
\t \t \t return $\mverify(\sk, \msg, \tau)$\\
\t return $1$, if $\mverify(\sk, \msg^*, \tau^*) = 1~\land~\msg^*\notin \mathcal{Q}$, return $0$, otherwise \\
}
\caption{$\EUFCMA$ security experiment for a MAC $\MAC$.}
\label{fig:mac-unfcma}
\end{experiment}

We recall the notion of key-encapsulations mechanisms (KEMs), and the standard chosen-plaintext and chosen-ciphertext notions below.
\begin{definition}[Key-Encapsulation Mechanism]
A key-encapsulation mechanism scheme $\KEM$ with key space $\Key$ consists of the three PPT algorithms \((\kgen,\Encaps,\Decaps)\):
\begin{description}
  \item[$\kgen(1^\secpar)\colon$] This algorithm takes a security parameter $\secpar$ as input, and outputs public and secret keys \((\pk,\sk)\).
  \item[$\Encaps(\pk)\colon$] This algorithm takes a public key $\pk$ as input, and outputs a ciphertext \(\ctxt\) and key $\key$.
  \item[$\Decaps(\sk, \ctxt)\colon$] This algorithm takes a secret key $\sk$ and a ciphertext $\ctxt$ as input, and outputs \(\key\) or \(\{\bot\}\).
\end{description}
\end{definition}
We call a $\KEM$ correct if for all \(\secpar\in\N\), for all \((\pk,\sk)\gets\kgen(\secpar)\), for all \((\ctxt,\key)\gets\Encaps(\pk)\), we have that
\[
  \Pr[\Decaps(\sk,\ctxt)=\key] = 1 \text{,}
\] where the probability is taken over the random coins of $\kgen$ and $\Encaps$.
\begin{definition}[IND-CPA and IND-CCA security of \KEM]
  For a PPT adversary $\advA$, we define the advantage function in the sense of indistinguishability under chosen-plaintext attacks (\KEMINDCPA) and indistinguishability under chosen-ciphertexts attacks (\KEMINDCCA) as
  \begin{align*}
    \AdvKEMINDCPA{\advA,\KEM}(1^\secpar) &= \left| \Pr\left[ \ExpKEMINDCPA{\advA,\KEM}(1^\secpar) = 1\right] -\frac{1}{2} \right|
    \text{, and} \\
    \AdvKEMINDCCA{\advA,\KEM}(1^\secpar) &= \left| \Pr\left[ \ExpKEMINDCCA{\advA,\KEM}(1^\secpar) = 1\right] -\frac{1}{2} \right|
  \end{align*}
  where the corresponding experiments are depicted in \Cref{fig:kem-ind-cca}. If for all PPT adversaries $\adv$ there is a negligible function $\varepsilon(\cdot)$ such that
  \[
    \AdvKEMINDCPA{\advA,\KEM}(1^\secpar) \leq \varepsilon(\secpar)
    \text{ or }
    \AdvKEMINDCCA{\advA,\KEM}(1^\secpar) \leq \varepsilon(\secpar),
  \]
  then we say that $\KEM$ is \KEMINDCPA or \KEMINDCCA secure, respectively.
\end{definition}
\begin{experiment}[t]
\centering
\pseudocode[mode=text]{
$\ExpKEMINDT{\advA,\KEM}(\secpar)$: \\
\t $(\pk, \sk) \gets \kgen(1^\secpar)$\\
\t \((\ctxt^*,\key_0)\gets\Encaps(\pk),\key_1\getsr\Key\)\\
\t $\mathcal{Q} \gets \emptyset$, \(b\getsr\{0,1\}^\secpar\)\\
\t $b^* \gets\advA^{\mathcal{O}}(\pk, \ctxt^*, \key_b)$\\
\t \t where $\mathcal{O} = \{\Decaps'\}$ if $T = \mathsf{cca}$ with oracle $\Decaps'(\ctxt)$:\\
\t \t \t $\mathcal{Q} \gets \mathcal{Q} \cup \{\ctxt \}$\\
\t \t \t return $\Decaps(\sk, \ctxt)$\\
\t return $1$, if $b = b^*~\land~\ctxt^*\notin \mathcal{Q}$, return otherwise $0$ \\
}
\caption{IND-CPA and IND-CCA security experiments for \KEM with $T \in \{\mathsf{cpa}, \mathsf{cca}\}$.}
\label{fig:kem-ind-cca}
\end{experiment}

We further use the Authenticated Encryption with Associated Data (AEAD) scheme as defined in \cite{CCS:RBBK01}.

\begin{remark}
    We can straightforwardly define post-quantum security for our definitions by requiring that the definitions hold against QPT adversaries.
\end{remark}

\subsection{Hybrid Authenticated Key Exchange}

We recall the hybrid authenticated key exchange (HAKE) security model~\cite{PQCRYPTO:DowHanPat20,PQCRYPTO:BruRamStr23}. The HAKE security experiment $\mathsf{Exp}^{\mathsf{hake},\clean}_{\advA,\Pi,n_P,n_S,n_T}$ is described as in \cite[Fig. 5, App. C]{PQCRYPTO:DowHanPat20}. Here, we only recall the execution environment, adversarial interaction, and matching sessions.

\paragraph{Execution environment.}
We consider a set of $n_P$ parties $P_1, \ldots, P_{n_P}$ which are able to run up to $n_S$ sessions of a key-exchange protocol between them, where each session may consist of $n_T$ different stages of the protocol\footnote{Notice the terminology here is different to that in, say, the analysis of TLS 1.3~\cite{JC:DFGS21}. What we call a \textit{stage} would in that context be called a \textit{session}, and we do not have any subdivision below the session level.}. A ``stage'' implicitly describes a pair of stages, one local to the initiator and one local to the responder. A stage is said to have ``accepted'' if it has completed its key derivation schedule without aborting the protocol. Each party $P_i$ has access to its long-term key pair $(\pk_i, \sk_i)$ and to the public keys of all other parties. Each session is described by a set of session parameters:
\begin{itemize}
  \item $\rho \in \{\mathsf{init},\mathsf{resp}\}$: The role (initiator or responder) of the party during the current session.
  \item $pid \in n_P$: The communication partner of the current session.
  \item $stid \in n_T$: The current stage of the session.
  \item $\alpha \in \{\act,\acc,\rej,\bot\}$: The status of the session. Initialized with $\bot$.
  \item $m_i[stid], i \in \{s,r\}$: All messages sent ($i=s$) or received ($i=r$) by a session up to the stage $stid$. Initialized with $\bot$.
  \item $k[stid]$: All session keys created up to stage $stid$. Initialized with $\bot$.
  \item $exk[stid], x \in \{q,c,s\}$: All ephemeral post-quantum (q), classical (c) or symmetric (s) secret keys created up to stage $stid$. Initialized with $\bot$.
  \item $pss[stid]$: The per-session secret state (\SecState) that is created during the stage $stid$ for the use in the next stage.
  \item $st[stid]$: Storage for other states used by the session in each stage.
\end{itemize}

We describe the protocol as a set of algorithms $(f, \kgen{}XY, \kgen{}ZS)$:
\begin{itemize}
  \item $f(\secpar, \pk_i, \sk_i, pskid_i, psk_i,\pi,m) \rightarrow (m',\pi')$: A probabilistic algorithm that represents an honest execution of the protocol. It takes a security parameter $\secpar$, the long-term keys $(\pk_i, \sk_i)$, the session parameters $\pi$ representing the current state of the session, and a message $m$, and outputs a response $m'$ and the updated session state $\pi'$.
  \item $\kgen{}XY(\secpar) \rightarrow (\pk,\sk)$: A probabilistic asymmetric key-generation algorithm that takes a security parameter $\secpar$ and creates a public-secret-key pair $(\pk,\sk)$. $X \in \{E,L\}$ determines whether the created key is an ephemeral (E) or a long-term (L) secret key. $Y \in \{Q,C\}$ determines whether the key is classical (C) or post-quantum (Q).
  \item $\kgen{}ZS(\secpar) \rightarrow (psk,pskid)$: A probabilistic symmetric key-generation algorithm that takes a security parameter $\secpar$ and outputs symmetric keying material $(psk)$. $Z \in \{E,L\}$ determines whether the created key is an ephemeral (E) or a long-term (L) secret key.
\end{itemize}

For each party $P_1, \ldots, P_{n_P}$, classical as well as post-quantum long-term keys are created using the corresponding $\kgen{}XY$ algorithms. The challenger then queries a uniformly random bit $b \gets \{0,1\}$ that will determine the key returned by the $\Test$ query. From this point on, the adversary may interact with the challenger using the queries defined in the next section. At some point during the execution of the protocol, the adversary $\advA$ may issue a single $\Test$ query\footnote{Unlike similar works in the area we explicitly limit our adversary to making exactly one $\Test$ query. The reason is for the sake of simplicity: proofs of protocols in which the adversary can issue many $\Test$ queries begin by guessing if a particular session was tested, and aborting otherwise, thereby allowing the proof to argue about a single tested session. By simply limiting to a single test query, the advantage derived in our proof will differ only by a constant factor from the advantaged derived for an equivalent proof in which multiple test queries were permitted. Since proofs of this kind are highly non-tight anyway, we consider this an acceptable sacrifice in the service of readability.} and present a guess for the value of $b$. If $\advA$ guesses correctly and the session satisfies the cleanness predicate, the adversary wins the key-indistinguishability experiment.

\paragraph{Adversarial Interaction.}
The HAKE framework defines a range of queries that allow the attacker to interact with the communication:
\begin{itemize}
  \item $\Create(i,j,role) \rightarrow \{(s),\bot\}$: Initializes a new session between party $P_i$ with role $role$ and the partner $P_j$. If the session already exists, then the query returns $\bot$, otherwise the session $(s)$ is returned.
  \item $\Send(i,s,m) \rightarrow \{m',\bot\}$: Enables $\advA$ to send messages to sessions and receive the response $m'$ by running $f$ for the session $\pi_i^s$. Returns $\bot$ if the session is not active.
  \item $\Reveal(i,s,t)$: Provides $\advA$ with the session keys, corresponding to a session $\pi_i^s$ in stage $t$, if the session is in the accepted state. Otherwise, $\bot$ is returned.
  \item $\Test(i,s,t) \rightarrow \{k_b,\bot\}$: Provides $\advA$ with the real (if $b=1$) or random ($b=0$) session key, for the session $\pi_i^s$ in stage $t$, for the key-indistinguishably experiment.
  \item $\Corrupt{}XY(i) \rightarrow \{key,\bot\}$: Provides $\advA$ with the long-term $XY \in \{\mathsf{SK},\mathsf{QK},\mathsf{CK}\}$ keys for $P_i$. If the key has been corrupted previously, then $\bot$ is returned. Specifically:
    \begin{itemize}
    \item $\CorruptSK$: Reveals the long-term symmetric secret (if available).
    \item $\CorruptQK$: Reveals the post-quantum long-term key (if available).
    \item $\CorruptCK$: Reveals the classical long-term key (if available).
    \end{itemize}
  \item $\Compromise{}XY(i,s,t) \rightarrow \{key,\bot\}$: Provides $\advA$ with the ephemeral $XY \in \{ \mathsf{QK}, \mathsf{CK}, \mathsf{SK}, \mathsf{SS} \}$ keys created during the session $\pi_i^s$ prior to stage $t$. If the ephemeral key has already been compromised, then $\bot$ is returned. Specifically:
    \begin{itemize}
    \item $\CompromiseQK$: Reveals the ephemeral post-quantum key.
    \item $\CompromiseCK$: Reveals the ephemeral classical key.
    \item $\CompromiseSK$: Reveals the ephemeral quantum key.
    \item $\CompromiseSS$: Reveals the ephemeral per session state (\SecState).
    \end{itemize}
\end{itemize}

\paragraph{Matching sessions.}
Furthermore, we recall the definitions of matching sessions~\cite{PROVSEC:LKZC07} and origin sessions~\cite{ESORICS:CreFel12} which covers that the two parties involved in a session have the same view of their conversation.
\begin{definition}[Matching sessions]
We consider two sessions $\pi_i^s$ and $\pi_j^r$ in stage t to be matching if all messages sent by the former session $\pi_i^s.m_s[t]$ match those received by the later $\pi_j^r.m_r[t]$ and all messages sent by the later session $\pi_j^r.m_s[t]$ are received by the former $\pi_i^s.m_r[t]$.

$\pi_i^s$ is considered to be prefix-matching with $\pi_j^r$ if $\pi_i^s.m_s[t] = \pi_j^r.m_r[t]'$ where $\pi_j^r.m_r[t]$ is truncated to the length of $\pi_i^s.m_s[t]$ resulting in $\pi_j^r.m_r[t]'$.
\end{definition}
\begin{definition}[Origin sessions]
We consider a session $\pi_i^s$ to have an origin session with $\pi_j^r$ if $\pi_i^s$ matches $\pi_j^r$ or if $\pi_i^s$ prefix-matches $\pi_j^r$.
\end{definition}

\paragraph{Implicit vs. explicit authentication.} KEM authentication by itself achieves only ``implicit authentication''. Intuitively, the idea is that as long as KEM security holds and long-term secrets are not prematurely compromised, only the intended party can recover the intended secret. The authentication is implicit because, under these assumptions, any keying material derived from this implicitly authenticated secret must only be available to the intended party. We do not, however, get a guarantee that the intended party has indeed accessed this secret -- so-called ``explicit authentication''. Unlike in the signature case, which provides authentication and proof that the intended party is ``live'' in the same step, this gap must be addressed. This is the reason for the $\MAC$ exchange at the end of the protocol -- it provides a proof that the partner is `live'. The strategy of the proof is to replace the secret that was encapsulated against the static public key, by a random secret, bounding the difference in advantage between these two games by the advantage in a $\INDCCA1$ game. We can then argue that a peer who accepted the final $\MAC$ tag without an honest partner session must have accepted a forgery, which allows us to bound the final advantage by an advantage in a $\MAC$ unforgeability game. Note that the case in which an honest session accepts the final $\MAC$ tag without an honest a partner is called ``malicious acceptance'' in \cite{schwabe2020post}.

\paragraph{HAKE security.}
Dowling et al.~\cite{PQCRYPTO:DowHanPat20} define key indistinguishability (i.e., what we dub HAKE security) with respect to a predicate $\clean$. However, their predicate is specific to Muckle and, hence, we therefore only give the formal security definition next and postpone the discussion of the predicate to \Cref{sec:kemuckle-sec}.

\begin{definition}[HAKE security]
  Let $\Pi$ be a key-exchange protocol and $n_P, n_S, n_T \in \N$. For a predicate $\clean$ and an adversary $\advA$, we define the advantage of $\advA$ in the HAKE key-indistinguishability game as
  \[
    \Adv^{\mathsf{hake},\clean}_{\advA,\Pi,n_P,n_S,n_T}(\secpar) = \left| \Pr\left[ \mathsf{Exp}^{\mathsf{hake},\clean}_{\advA,\Pi,n_P,n_S,n_T}(\secpar) = 1 \right] \right|\text{.}
  \]
  We say that $\Pi$ is HAKE-secure if $\Adv^{\mathsf{hake},\clean}_{\advA,\Pi,n_P,n_S,n_T}(\secpar)$ is negligible in the security parameter $\secpar$ for all $\advA$. 
  
  If security also holds against any QPT adversary $\advA$, then we call $\Pi$ post-quantum secure.
\end{definition}

\section{Muckle\#: A HAKE Protocol With KEM-Based Authentication}\label{sec:kemuckle}

In this section, we present our novel variant $\KEMuckle$. To recall, the $\SigMuckle$ protocol \cite{PQCRYPTO:BruRamStr23} combines conventional, \pqc{}, and \qkd{} keys through the use of a key derivation function. More concretely, $\SigMuckle$ requires conventional and post-quantum KEMs, a \qkd{} mechanism as well as a digital signature scheme to create the final shared secret. $\KEMuckle$ on the other side is similar to $\SigMuckle$, but utilizes post-quantum KEMs instead of signatures in the authentication step. \Cref{fig:sig_muckle_stage} depicts one stage of the $\KEMuckle$ protocol that overall runs in several stages.

\begin{figure}
\centering
\scriptsize
\begin{tikzpicture}\node{
  \begin{tabular}{p{4.5cm} >{\centering}p{2cm} p{4.5cm}}
  \centering
  Initiator & & \centering Responder \tabularnewline
  \hline
  \centering $\sk_I$, $cert_I$, $pid_I$ \SecState  & & \centering $\sk_R$, $cert_R$, $pid_R$ \SecState \tabularnewline
   & & \tabularnewline
  $n_I \getsr \{0,1\}^{\secpar}$ & & \tabularnewline
  $\pk_c, \sk_c \gets \KEM_c.\kgen(1^\secpar)$ & & \tabularnewline
  $\pk_{pq}, \sk_{pq} \gets \KEM_{pq}.\kgen(1^\secpar)$ & & \tabularnewline
   & $\xrightarrow{\makebox[2cm]{$\bm{m_1}\colon \pk_c, \pk_{pq}, n_I$}}$ & \raggedleft $n_R \getsr \{0,1\}^{\secpar}$ \tabularnewline
   & & \raggedleft $\ctxt_c, ss_c \gets \KEM_c.\Encaps(\pk_c)$ \tabularnewline
   & & \raggedleft $\ctxt_{pq}, ss_{pq} \gets \KEM_{pq}.\Encaps(\pk_{pq})$ \tabularnewline
   & $\xleftarrow{\makebox[2cm]{$\bm{m_2}\colon \ctxt_c, \ctxt_{pq}, n_R$}}$ & \tabularnewline
   $ss_c \gets \KEM_c.\Decaps(\sk_c, \ctxt_c)$ & & \tabularnewline
   $ss_{pq} \gets \KEM_{pq}.\Decaps(\sk_{pq}, \ctxt_{pq})$ & & \tabularnewline
   & & \tabularnewline
   \multicolumn{3}{c}{$k_c \gets \prfeval(ss_c,\lbl_0\|H_1)$}  \tabularnewline
   \multicolumn{3}{c}{$k_{pq} \gets \prfeval(ss_{pq},\lbl_1\|H_1)$} \tabularnewline
   & & \tabularnewline
   \multicolumn{3}{c}{$k_0\gets\prfeval(k_{pq},\lbl_2\|H_1)$}  \tabularnewline
   \multicolumn{3}{c}{$k_1\gets\prfeval(k_c,\lbl_3\|k_0)$} 
   \tabularnewline
   & & \tabularnewline
$k_q\gets\GetKey_{qkd}(1^\secpar, pid_I)$ & &\hfill $k_q\gets\GetKey_{qkd}(1^\secpar, pid_R)$\tabularnewline
   \multicolumn{3}{c}{$k_2\gets\prfeval(k_q,\lbl_4\|k_1)$} \tabularnewline
   \multicolumn{3}{c}{$k_3\gets\prfeval(\SecState,\lbl_5\|k_2)$} \tabularnewline
    & & \tabularnewline
   \multicolumn{3}{c}{$\IHTS \gets \prfeval(k_3,\lbl_7\|H_1)$} \tabularnewline
   \multicolumn{3}{c}{$\RHTS \gets \prfeval(k_3,\lbl_8\|H_1)$} \tabularnewline
    & & \tabularnewline
   \multicolumn{3}{c}{$\dHS \gets \prfeval(k_3,\lbl_6\|H_0)$} \tabularnewline
   & \tabularnewline

   Verify $cert_R[\pk_R]$ & $\xleftarrow{\makebox[2cm]{$\bm{m_3}\colon \{cert_R[\pk_R]\}_{\RHTS}$}}$ \tabularnewline
   & & \tabularnewline
   $(\ctxt_I,ss_I)\gets\KEM_s.\Encaps(\pk_R)$ \tabularnewline
   & \tabularnewline
   & $\xrightarrow{\makebox[2cm]{$\bm{m_4}\colon \{\ctxt_I\}_{\IHTS}$}}$ \tabularnewline
   & & \raggedleft $ss_I\gets\KEM_s.\Decaps(\sk_R,\ctxt_I)$ \tabularnewline
   & & \tabularnewline
   \multicolumn{3}{c}{$\AHS \gets \prfeval(\dHS,\lbl_9\|ss_I)$}  \tabularnewline
   & & \tabularnewline
   \multicolumn{3}{c}{$\IAHTS \gets \prfeval(\AHS,\lbl_{10}\|H_2)$} \tabularnewline
   \multicolumn{3}{c}{$\RAHTS \gets \prfeval(\AHS,\lbl_{11}\|H_2)$} \tabularnewline
   & & \tabularnewline
   \multicolumn{3}{c}{$\dAHS \gets \prfeval(\AHS,\lbl_{12}\|H_0)$} \tabularnewline
   & & \tabularnewline
   & $\xrightarrow{\makebox[2cm]{$\bm{m_5}\colon \{cert_I[\pk_I]\}_{\IAHTS}$}}$
   & \raggedleft Verify $cert_I[\pk_I]$ \tabularnewline
   \tabularnewline
   & & \raggedleft $(\ctxt_R,ss_R)\gets\KEM_s.\Encaps(\pk_R)$ \tabularnewline
   & $\xleftarrow{\makebox[2cm]{$\bm{m_6}\colon \{\ctxt_R\}_{\RAHTS}$}}$ \tabularnewline
   $ss_R\gets\KEM.\Decaps(\sk_I,\ctxt_R)$ \tabularnewline
   & & \tabularnewline
   \multicolumn{3}{c}{$\MS\gets\prfeval(\dAHS,\lbl_{13}\|ss_R)$} \tabularnewline
   & & \tabularnewline
   \multicolumn{3}{c}{$\fk_I\gets\prfeval(\MS,\lbl_{14}\|H_3)$} \tabularnewline
   \multicolumn{3}{c}{$\fk_R\gets\prfeval(\MS,\lbl_{15}\|H_3)$} \tabularnewline
   $\IF \gets\MAC.\msign(\fk_I,H_3)$ \tabularnewline
   & $\xrightarrow{\makebox[2cm]{$\bm{m_7}\colon \{\IF\}_{\IAHTS}$}}$ & \raggedleft Abort if $\MAC.\mverify(\fk_I,H_3,\IF)\stackrel{?}{=}0$\tabularnewline
   & & \tabularnewline
  \multicolumn{3}{c}{$\IATS\gets\prfeval(\MS,\lbl_{16}\|H_4)$} \tabularnewline
   & & \raggedleft $\RF \gets \MAC.\msign(\fk_R,H_4)$ \tabularnewline
   Abort if $\MAC.\mverify(\fk_R,H_4,\RF)\stackrel{?}{=}0$ & $\xleftarrow{\makebox[2cm]{$\bm{m_8}\colon \{\RF\}_{\RAHTS}$}}$ \tabularnewline
   & & \tabularnewline
   \multicolumn{3}{c}{$\RATS\gets\prfeval(\dAHS,\lbl_{17}\|H_5)$} \tabularnewline
   \multicolumn{3}{c}{$\SecState\gets\prfeval(\MS,\lbl_{18}\|H_5)$} \tabularnewline
  \end{tabular}
};
\end{tikzpicture}
\caption{One stage of the \KEMuckle protocol. We have a classical KEM $\KEM_c$, a post-quantum KEM $\KEM_{pq}$, a MAC $\MAC$, a pseudorandom function $\prfeval$, a post-quantum KEM $\KEM_s$ used for authentication, and a symmetric key $k_q$ obtained from the \qkd{} component (provided out-of-band via function $\GetKey_{qkd}$). We assume that the certificates $cert_I$ and $cert_R$ contain the long-term public keys $\pk_I$ and $\pk_R$, respectively. Messages $\bm{m_i}\colon \{ \msg_{i,1}, \ldots \}_k$ denote that $\msg_{i,1}, \ldots$ is encrypted with an authenticated encryption scheme using the secret key $k$. The various contexts and labels are given in \Cref{tab:contexts,tab:labels}.} 
\label{fig:sig_muckle_stage}
\end{figure}

\begin{table}[t]
\centering
\caption{Values for the contexts used in the \KEMuckle key schedule. The context inputs follow the choices in the TLS 1.3 handshake~\cite{JC:DFGS21}.}
\label{tab:contexts}
\begin{tabular}{c|l|c|l}
  \toprule
Label & Context Input & Label & Context Input \\
\midrule
$H_\varepsilon$ & ``'' &
$H_0$ & $H(\text{``''})$ \\
$H_1$ & $H(\bm{m_1} \| \bm{m_2})$ &
$H_2$ & $H(\bm{m_1} \| \ldots \| \bm{m_4})$ \\
$H_3$ & $H(\bm{m_1} \| \ldots \| \bm{m_6})$  &
$H_4$ & $H(\bm{m_1} \| \ldots \| \bm{m_7})$  \\
$H_5$ & $H(\bm{m_1} \| \ldots \| \bm{m_8})$  &
 \\
\bottomrule
\end{tabular}
\end{table}

\begin{table}[t]
\centering
\caption{Values for the labels used in the \SigMuckle key schedule for domain separation. Some of these labels are directly based on the corresponding labels in the TLS 1.3 handshake~\cite{JC:DFGS21}. The concrete value of these labels is unimportant as long as they are unique.}
\label{tab:labels}
\begin{tabular}{c|l|c|l}
  \toprule
Label & Label Input & Label & Label Input \\
\midrule
$\lbl_0$ & ``derive k c'' &
$\lbl_1$ & ``derive k pq'' \\
$\lbl_2$ & ``first ck'' &
$\lbl_3$ & ``second ck'' \\
$\lbl_4$ & ``third ck'' &
$\lbl_5$ & ``fourth ck'' \\
$\lbl_6$ & ``i hs traffic'' &
$\lbl_7$ & ``r hs traffic'' \\
$\lbl_8$ & ``hs derived'' &
$\lbl_9$ & ``first ak''\\
$\lbl_{10}$ & ``i ahs traffic'' & \lbl{11} & ``r ahs traffic'' \\
$\lbl_{12}$ & ``ahs derived'' & $\lbl_{13}$ & ``second ak'' \\
$\lbl_{14}$ & ``derive i fk'' & $\lbl_{15}$ & ``derive r fk'' \\
$\lbl_{16}$ & ``i app traffic'' & $\lbl_{17}$ & ``r app traffic'' \\
$\lbl_{18}$ & ``secstate'' & \\
 
\bottomrule
\end{tabular}
\end{table}

The protocol flow is as follows. The initiator and the responder in the initial stage possess secret decryption keys $\sk_I$ and $\sk_R$ of a post-quantum KEM, respectively, and a secret state $\SecState$ (which is initially set to an empty string). 

The initiator and the responder further possess certificates $cert_I$ and $cert_R$, respectively, from a Public-Key Infrastructure (which is out of scope of this work). The certificates contain the verification keys $\pk_I$ and $\pk_R$ (available to all entities), respectively. 

The initiator and the responder have access to the labels defined in \Cref{tab:labels}. Those labels essentially assure separation of the key derivation function (\KDF) input (which is depicted as $\prfeval$). The initiator chooses a random bitstring $n_I$ with bit-length greater or equal than 128 bits. The initiator generates ephemeral public-secret key pairs $(\pk_{c}, \sk_{c})$, $(\pk_{pq}, \sk_{pq})$ via the key generation of the classic and post-quantum KEM, respectively. (If no public-secret key pair $(\pk_c, \sk_c)$ is generated, then the initiator sets $pk_c$ and $sk_c$ to empty strings.)

The initiator sends the public keys and the random values to the responder via a public channel.

The responder chooses a random bitstring $n_R$ with bit-length $k$ (with $k$ greater or equal than 128 bits). The responder generates an encapsulation $(c_{pq}, ss_{pq})$ via the encapsulation algorithm of the post-quantum KEM using the public keys $pk_{pq}$. If $pk_c$ is available at the responder, the responder generates an encapsulation $(c_c, ss_c)$ via the encapsulation algorithm of the classical KEM using the public key $pk_c$. If $pk_c$ is an empty string, then $c_c$ and $ss_c$ are set as empty strings.

The responder sends $c_{pq}$, $c_c$, and $n_R$ to the initiator via a public channel. The initiator computes the decapsulation $ss_{pq}$ and potentially $ss_c$ using the decapsulation algorithm of the post-quantum KEM and the classical KEM using the secret key $\sk_{pq}$ and $\sk_c$ with the ciphertexts $c_{pq}$ and $c_c$, respectively. If $c_c$ is an empty string, then $ss_c$ is set as an empty string. 

Now, the initiator and responder invoke a series of \KDF calls as given in \Cref{fig:sig_muckle_stage} where $k_q$ is the symmetric \qkd{} key. This essentially ensures a cryptographically sound way of ``binding'' the different keys and \SecState together (which will argue about in the security proof).

The responder computes the encryption of $cert_R$ using the encryption algorithm of the AEAD with encryption key $\RHTS$ (and with associated data string ``Message 3'') and must send the resulting ciphertext $m_3$ to the initiator. The initiator computes the decryption of the received ciphertext $m_3$ using the decryption algorithm of the AEAD with decryption key $\RHTS$ and must verify the certificate $cert_R$ (where latter is out of scope of this work). Both parties have a value \dHS.

Now, the authentication phase starts. The initiator computes the encapsulation ciphertext and key $c_I,ss_I$ under the retrieved long-term public key $\pk_R$ of the receiver using the encapsulation algorithm of the post-quantum KEM and sends the $c_I$ (AEAD-encrypted under \IHTS) to the receiver.

The receiver computes the decapsulation key $ss_I$ under the retrieved ciphertext $c_I$ (AEAD-decrypted under \IHTS) using the decapsulation algorithm of the post-quantum KEM. Both parties are able to retrieve the value $\dAHS$ via several calls to the \KDF (see that also $ss_I$ is given as input to one \KDF call). The correctness of the post-quantum KEM guarantees that starting from the same value \dHS, the initiator and receiver come to the same value \dAHS \textit{and} the responder is authenticated.

Next, the protocol does the same analogously for the initiator and, hence, both derive at the ``master value'' value \MS \textit{and} are mutually authenticated. What is missing is the key confirmation which is carried out on both sides via \MAC key derived from \MS. Finally, after successful mutual key confirmations, the secure state \SecState gets updated via the \KDF using the master value \MS. The secure state will input to the next stage of the protocol (once triggered).

Under the assumption of the correctness of the KEM and MAC building blocks, correctness of \MuckleSharp follows.

\paragraph{Binding, Intended Partners, and the QKD Link.} Each party in a session of \MuckleSharp has an input parameter indicating their intended session partner. Implicitly, when certificates are received, this is the value that the parties are checking against.

On the other hand, as discussed, we model the entire QKD link out-of-band, including its authentication. As such, in our call to the QKD link we need to explicitly specify the intended partner with whom we wish to establish a QKD key.

Notice that in the pre-shared key variant of TLS 1.3~\cite{PQCRYPTO:DowHanPat20}, parties are required to ensure, at the start of the protocol, that they are working from the same pre-shared key. This is accomplished by deriving a ``binding key'', that is used to compute a $\MAC$ of some pre-arranged label. We do not require this property if we are happy to insist that our long-term authentication secrets are not compromised before the session completes, since the authentication mechanism will detect a disagreement between the two parties on the key established by the QKD link. However, if we wish to allow everything but the QKD link to fail, we would require the derivation of a binding key to ensure that each local copy of the session is using secrets derived from the same key established by the QKD link. We refer the reader to the pre-shared key variant of TLS 1.3 for brief details how this would be done.

\subsection{Security of \KEMuckle}\label{sec:kemuckle-sec}

The security proof is inspired by ideas in the realm of quantum-safe key exchanges. In particular we will use a security game in the tradition of the Bellare-Rogaway style key-indistinguishability games \cite{bellare1993entity}, adapted to the hybrid setting in \cite{PQCRYPTO:DowHanPat20}. Roughly speaking, we consider an adversary who controls \textit{all} communication between honest parties, and has access to various oracles by which it can learn stage keys, ephemeral secrets, long-term secrets, and so on; the idea is that the adversary should not be able to do any better than simply acting passively as a wire, and transmitting the real information sent by the parties.

The HAKE framework was presciently set up to cover a wide variety of protocols in the hybrid setting. As such we do not have to adapt the framework to cater for our protocol -- for example, the compromise of long-term KEM keys is the same as the compromise of long-term signature keys from the point of view of the relevant HAKE queries. Our security proof, however, also draws heavily from the techniques used in the security proof of \KEMTLS~\cite{CCS:SchSteWig20}. Unlike our analysis, the security proof of \KEMTLS treats that protocol as establishing several internal ``stage keys,'' each of which can be revealed via queries available to the adversary. By contrast, we only deal with the indistinguishability of the final master secret; nevertheless, in the course of our security proof, we will show that the keys used internally for symmetric encryption, which could reasonably be described as ``stage keys'' in the model of \KEMTLS, are indistinguishable from random.

\paragraph{The $\KEMuckle$ cleanness predicate.} As is standard in this area, we need to define a permissible set of queries which the adversary is allowed to issue. We do this by considering 
a stage ``fresh'' only if certain queries or combinations of queries have not been issued, and aborting if the tested stage is not fresh. This is both to cover trivial breaches -- for example, a $\Reveal$ and a $\Test$ query being issued on the same stage -- but also to define the scope of robustness of the protocol; that is, which secrecy breaches we can tolerate while maintaining a small advantage in the HAKE key-indistinguishability experiment. 

The list of combinations of queries that prevent the freshness of a stage is called the ``cleanness predicate;'' both of \cite{PQCRYPTO:DowHanPat20} and \cite{PQCRYPTO:BruRamStr23} define bespoke cleanness predicates to cater to their particular authentication mechanisms. We define a cleanness predicate tailored for our context. In this direction, we note that an adversary $\advA$ has access to all queries defined in the HAKE framework. As no pre-shared key exists in the (first-time version of the) $\KEMuckle$ protocol, the query $\CorruptSK$ will return $\perp$ if called.

The predicate itself is, aside from the trivial controls on $\Reveal$ queries, divided into a passive and active case (as discussed, captured by the notion of matching sessions). The idea is that if the adversary is passive it suffices for the private ephemeral keys not to be leaked if the $\INDCPA$ security of the $\KEM$s holds; if the adversary is active, ephemeral $KEM$s do not offer any security by themselves, so we require that the long-term authentication secrets do not fail before the stage accepts. A $\KEMuckle$ session $\pi_i^s$ in stage $t$ is \textit{clean} if:
\begin{itemize}
    \item $\Reveal(i,s,t)$ has not been issued
    \item $\Reveal(j,r,t)$ has not been issued for sessions $\pi_j^r$  matching $\pi_i^s$ at stage $t$.
    \item If $\pi_i^s$ has a matching session $\pi_j^r$; at least one of the following is true:
        \begin{itemize}
            \item If $\pi_i^s$ has the initiator role in stage $t$, $\CompromiseQK(i,s,t)$ has not been issued before $\pi_i^s[t]$ accepts; if it has the responder role, $\CompromiseQK(j,r,t)$ has not been issued before before $\pi_j^r[t]$ accepts
            \item If $\pi_i^s$ has the initiator role in stage $t$, $\CompromiseSK(i,s,t)$ has not been issued; if it has the responder role, $\CompromiseSK(j,r,t)$ has not been issued
        \end{itemize}
    \item If there is no $(j,r,t)$ such that $\pi_j^r$ is an origin session of $\pi_i^s$ in stage $t$, at least one of the following holds:
        \begin{itemize}
            \item $\CorruptQK(i)$ has not been issued before $\pi_j^r$ accepts
            \item $\CompromiseSK(i,s,t)$ has not been issued
        \end{itemize}
\end{itemize}

\paragraph{Resumption stages and symmetric authentication.} $\KEMuckle$ keeps track of the internal variable $\SecState$, which on the first stage of the session is initialized to some public value, and updated as a PRF evaluation of the master secret at the end of the stage. This new value is then fed into a subsequent stage of the protocol.

In subsequent stages of the session (say stage $i$), we can tolerate ephemeral and even long-term secrets failing, provided that we are willing to assume that $\CompromiseSS$ was not issued in stage $i-1$ (arguing by a similar line to that in the proof). On the other hand, if we wish to add this scenario to the cleanness predicate, we might wish to define a version of $\KEMuckle$ that uses more efficient symmetric authentication (or even does not inject further ephemeral secrets), analogously to the resumption sessions of TLS 1.3 \cite{JC:DFGS21}. Such a protocol, more or less, already exists; the original Muckle protocol \cite{PQCRYPTO:DowHanPat20}. As such one can imagine a scenario where we define $\KEMuckle$ as having its first stage consisting of the full protocol, and subsequent stages with $PSK$ authentication. One would in this case have to be a little more careful with the cleanness predicate, and we leave this exercise to further work.

\begin{theorem}
    Let $\prfeval$ be a dual PRF, $\MAC$ be an $ \EUFCMA$ secure MAC, $\KEM_c$ and $\KEM_{pq}$ be $\INDCPA$ secure KEMs, and $\KEM_s$ be an $\INDCCA$ secure KEM. Then the $\KEMuckle$ key exchange protocol is HAKE secure with the cleanness predicate $\clean_{\KEMuckle}$. If the security of $\prfeval$, $\MAC$, $\KEM_{pq}$ and $\KEM_{s}$, or of \qkd{} holds, then so does the security of $\KEMuckle$. 
\end{theorem}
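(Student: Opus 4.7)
The plan is to adapt the Bellare-Rogaway style hybrid argument of $\SigMuckle$ to the KEM-authentication setting, borrowing the $\INDCCA$ and malicious-acceptance arguments from the $\KEMTLS$ proof. First, I would guess which of the at most $n_P \cdot n_S$ sessions is the tested session and, if applicable, which of the remaining sessions is its matching session; this costs a polynomial loss that is absorbed into the final bound. Having fixed these, I would split the analysis according to the cleanness predicate: either the tested session has a matching session (passive case) or it does not (active case). In each case, the predicate provides a disjunction over which keying material must remain hidden, so the proof must cover each disjunct separately and then take a union bound. The overall shape of each branch is a standard sequence of games that eventually replaces the master secret $\MS$ with a uniformly random value, at which point the adversary has advantage $0$ in the $\Test$ query.

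In the passive case with a matching session, I would walk through the key schedule using $\prfeval$'s (dual-)PRF property. If $\CompromiseQK$ is not issued, then by the $\INDCPA$ security of $\KEM_{pq}$ (applicable because the adversary only sees honestly generated public keys and ciphertexts on that stage) I would replace $ss_{pq}$ with a random value, then by a PRF hop keyed on $ss_{pq}$ replace $k_{pq}$ with random; successive PRF invocations then propagate randomness through $k_0, k_1, k_2, k_3$ and on to $\IHTS, \RHTS, \dHS$, eventually yielding a uniformly random $\MS$. If instead $\CompromiseSK$ is not issued, I would invoke the \qkd{} security assumption to replace $k_q$ by a uniformly random value, then use the \emph{dual}-PRF property (applied with $k_q$ in the key slot of the $\prfeval(k_q, \lbl_4 \| k_1)$ step) to replace $k_2$ by random, and propagate as before. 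The latter branch alone establishes the second clause of the theorem, showing that \qkd{} security suffices even if all classical and post-quantum primitives fail.

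The active case is the main obstacle, since without an origin session the adversary may have injected arbitrary messages and the authentication must ``catch'' it. When $\CompromiseSK$ is not issued I apply the same \qkd{}-based argument as above. Otherwise, when $\CorruptQK$ has not been issued before acceptance, the peer's long-term KEM decryption key remains hidden, and I would invoke the $\INDCCA$ security of $\KEM_s$ to replace the implicitly authenticated shared secret $ss_I$ (respectively $ss_R$) with a uniformly random value; $\INDCCA$ rather than $\INDCPA$ is needed because concurrent honest sessions may legitimately decapsulate under the same long-term key and thereby supply a decapsulation oracle in the reduction. Propagating this random value through the key schedule renders $\AHS$, $\MS$, and the derived $\MAC$ keys $\fk_I, \fk_R$ uniformly random and independent of the adversary's view. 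For the tested session to have accepted without an origin session, the adversary would then have to forge a $\MAC$ tag $\IF$ or $\RF$ under a uniform and independent key, so the probability of such malicious acceptance is bounded by the $\EUFCMA$ advantage of $\MAC$; this is precisely the implicit-to-explicit bridge discussed in \Cref{sec:kemuckle-sec}. The main technical subtlety I expect is ordering the hops so that the $\INDCCA$ reduction remains well-defined across concurrent sessions sharing the same long-term keys and so that each application of the dual-PRF property is justified by a clean identical-until-bad step in the style of \cite{bellare2004code}, as in the $\KEMTLS$ proof.
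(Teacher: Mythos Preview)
Your proposal is correct and follows essentially the same approach as the paper: guess the tested session (and partner), split into a no-origin-session (active) case and an origin-session (passive) case, and within each cover the disjuncts of the cleanness predicate by the appropriate $\INDCCA$/$\INDCPA$ hop followed by a chain of $\dPRF$ hops down the key schedule, with the active case closed out by an $\EUFCMA$ malicious-acceptance bound. The only point to tighten is that the $\EUFCMA$ step is also required in the active \qkd{}-survives subcase, not just the $\KEM_s$-survives subcase you detail: once $k_q$ has been used to randomize the schedule, the tested session's acceptance without an origin session still forces a $\MAC$ forgery under a now-uniform key, so ``the same \qkd{}-based argument as above'' must be augmented by that final identical-until-bad step rather than stopping where the passive branch does.
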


\begin{proof}
We begin by guessing the parameters defining the stage to be tested. 

\begin{description}
    \item[Game A0:] This is the standard HAKE-experiment for an PPT adversary $\advA$. For notational convenience we write
  \[
  \Adv^{\mathsf{HAKE},\clean_{\KEMuckle},C_{5}}_{\advA,\KEMuckle,n_p,n_s,n_t}(\secpar) =: \Adv^{A_0}_{\advA}(\secpar).
  \]
     \item[Game A'0:] In Game A'0, we guess the parameters $(i,s,t)$ defining a stage to be tested, and the intended partner session in that stage. If $\Test(i',s',t')$ is issued with $(i,s,t)\neq (i',s',t')$ we abort the session. The advantage is 
    \[
    \Adv^{A_0}_{\advA}(\secpar) \leq n^2_Pn_Sn_T \Adv^{A'_0}_{\advA}(\secpar)
    \]
\end{description}
Since a $\Test$ query issued on any other session to the one we guessed causes an abort, henceforth we can talk about \textit{the} tested session (at the cost of incurring some polynomial factor in the bound).

We now split into two cases: the case that the tested session does not have an origin session in stage $t$, and the case that it does. We start with the former case.

\paragraph*{Case 1: no origin session.}

Suppose that a session $\pi_i^s$ does not have an origin session in stage $t$. Intuitively, the ephemeral KEMs provide no security here, since our opponent is active and can encapsulate their own choice of secret; the strategy of the proof is to start by replacing the secret encapsulated by the long term KEM, and then proceed by a series of game hops to an unwinnable game. For this replacement to be sound, we require that the tested session's intended partner's long-term secret, or the replacement, is trivially detected. If the long-term secret has been compromised, it suffices for the \qkd{} link to survive.

\paragraph*{Subcase 1a: No $\CorruptQK(i,s,t)$ has been issued.}

The arguments are slightly different, depending on whether the tested session was an initiator or a responder. In the interest of handling the more complex case first, we may for now suppose that the tested session is a responder. In this subcase, replacement of the initiator's static secret $ss_I$ is non-problematic.

\paragraph*{Sub-subcase 1a': $\pi_i^s$ is a responder in stage $t$.}

\begin{description}
  \item[Game A'0:] This is the game $A'0$ described above, with advantage $\Adv^{A'0}_{\advA}(\secpar)$.

  \item[Game A'1:] In Game $A1$ we replace $ss_I$ with a value $\widetilde{ss_I}$ sampled uniformly at random from the output space of $\KEM_s.\Encaps()$. Any subsequent value computed as a function of $ss_I$ is computed instead as a function of $\widetilde{ss_I}$. Consider our adversary $\advA$ given the data defining either game $A'0$ or game $A1$ (notice the syntax of both games is the same); we will relate $\advA$'s advantage in each of these games by an adversary tackling the $\INDCCA$ security game. Let $\advA1$ be such an adversary against the $\INDCCA$ security of $\KEM_s$. Suppose $\advA1$ receives challenge $(\pk^*,\ctxt^*,ss^*)$ in the $\INDCCA$ game. Since the session is assumed to be clean, $\advA$ will not issue a $\CompromiseQK$ query, so we may safely replace the long-term public key of the corrupted partner with $pk^*$. We can also replace the initiator's encapsulation $(c_I,ss_I)$ with the challenge $(\ctxt^*,ss^*)$. Notice, however, that because the tested responder session does not have an origin session, we are not guaranteed that $\ctxt^*$ is delivered. If this happens, let $c'_I$ received by the tested session. The $\INDCCA$ adversary, in this case, queries its decapsulation oracle on $c'_I$ and uses the response as $ss^*$.

  If $ss^*$ was the real secret that $ct^*$ was encapsulated against, the view of $\advA$ is exactly as though it is playing game $A1$; otherwise it is exactly as though it is playing game $A2$. If we use the output of $\advA$ as the guess for $\advA1$ it follows that
  \[
  \Adv^{A'_0}_{\advA}(\secpar)\leq \Adv^{A_1}_{\advA}(\secpar) + \Adv^{\INDCCA}_{\advA1,\KEM_s}(\secpar)
  \]
  
  \item[Game A'2:] In Game $A2$ we replace $\AHS$ with a value $\widetilde{\AHS}$ chosen uniformly at random from the output space of $\prfeval$. Again, any values computed as a function of $\AHS$ are instead computed as a function of $\widetilde{\AHS}$. We employ a similar argument to the above. Since the adversary is active and the \qkd{} link has not been assumed to have survived, we may assume access to $\dHS$. An adversary $\advA2$ trying to win the $\dPRF$ security game can query their PRF oracle on $\dHS$ and run $\advA$ on Game $A1$ with $\AHS$ replaced with the oracle response. If the oracle was indeed a pseudorandom function we have simulated Game A1 to $\advA$; otherwise we have simulated Game A2 to $\advA$, and so
  \[
  \Adv^{A1}_{\advA}(\secpar)\leq \Adv^{A2}_{\advA}(\secpar) + \Adv^{\dPRF}_{\advA2,\prfeval}(\secpar)
  \]

  \item[Game A'3:] In Game $A3$, $\IAHTS$ is replaced with a value sampled uniformly at random, $\widetilde{\IAHTS}$. Any value derived from $\IAHTS$ is instead computed as a function of $\widetilde{\IAHTS}$. An adversary $\advA3$ trying to win the $\dPRF$ security game can query its oracle on $\ell_9\|H_2$ and run $\advA$ on Game $A2$ with $\IAHTS$  replaced with the oracle response. Since in game $A2$, $\IAHTS$ is computed from $\widetilde{AHS}$, which is a uniformly random value not known to the adversary, if the oracle response was indeed from a pseudorandom function we have exactly simulated Game $A3$ to $\advA$; otherwise we have simulated $A3$. It follows that 
  \[
  \Adv^{A2}_{\advA}(\secpar) \leq \Adv^{A3}_{\advA}(\secpar) + \Adv^{\dPRF}_{\advA3,\prfeval}(\secpar)
  \]

  \item[Games A'4, A'5:] We repeat the argument above for the values $\RAHTS$ and $\dAHS$. With $\advA4,\advA5$ PRF adversaries defined similarly to the above one has
  \[
  \Adv^{A3}_{\advA}(\secpar) \leq \Adv^{A4}_{\advA}(\secpar) + \Adv^{\dPRF}_{\advA4,\prfeval}(\secpar)
  \]
  \[
  \Adv^{A4}_{\advA}(\secpar) \leq \Adv^{A5}_{\advA}(\secpar) + \Adv^{\dPRF}_{\advA5,\prfeval}(\secpar)
  \]  

  Indeed, without loss of generality we may suppose that $\advA4$ and $\advA5$ are the same adversary (since by assumption no PPT adversary has better than trivial advantage in the $\dPRF$ game). This observation yields
  \[\Adv^{A3}_{\advA}(\secpar) \leq \Adv^{A5}_{\advA}(\secpar) + 2\Adv^{\dPRF}_{\advA4,\prfeval}(\secpar)\]
  We will implicitly call upon this observation several times.
  
  \item[Game A'6:] In Game $A6$ we replace $\MS$ with $\widetilde{\MS}$, 
  a value sampled uniformly at random from the output space of $\prfeval$. Any values derived from $\MS$ are instead derived from $\widetilde{MS}$.
  The tested session is an initiator session and the adversary is active, so we may assume access to $ss_R$. A PRF adversary $\advA6$ can therefore query its PRF oracle on $\ell_{13}\|ss_R$ and run $\advA$ on game $A5$ with $\MS$ replaced by the oracle response. Since in Game $A5$, $\dAHS$ is a uniformly random value not known to the adversary, if the oracle output was from a pseudorandom function we have simulated Game $A5$ to $\advA$. Otherwise, we have simulated $A6$, yielding
  \[
  \Adv^{A5}_{\advA}(\secpar) \leq \Adv^{A6}_{\advA}(\secpar) + \adv^{\dPRF}_{\advA6,\prfeval}(\secpar)
  \]
  \item[Game A'7-A'11:] Since $\MS$ is now replaced with a uniformly random value, we may sequentially replace $\fk_I,\fk_R,\IATS,\RATS,\SecState$ with uniformly random values. As usual, any derivatives of these values are instead computed as derivatives of the replacement random values. By considering PRF adversaries $\advA7-11$ proceeding according to the strategy outlined above (and recalling our observation in Games A4 and A5), we conclude that
  \[
  \Adv^{A6}_{\advA}(\secpar) \leq \Adv^{A11}_{\advA}(\secpar) + 5\Adv^{\dPRF}_{\advA7,\prfeval}(\secpar)
  \]

  \item[Identical-until-bad] We now define the event `bad' as that whereby the initiator accepts a tag $RF$.  Let $B_0$ denote the event that the `bad' event occurs in Game A11.

  \item[Game A'12:] Game A12 is exactly the same as Game A11, except we abort if the bad event occurs. Since A11 and A12 are identical-until-bad, by \cite[Lemma~2]{bellare2004code}
  we have 
  \[
  |\Adv^{A11}_{\advA}(\secpar)-\Adv^{A12}_{\advA}(\secpar)|< \Pr(B_0)
  \]
  
  Notice that in Game A12, because the adversary can only win the game if the session accepts, but we abort whenever a session accepts because of the bad event, trivially we have $\Adv^{A12}_{\advA}(\secpar)=0$.
  It remains to bound $\Pr(B_0)$.

  \item[Bounding $\Pr(B_0)$:] Recall we are in the subcase where there is at least one $\bm{m_i}$ that is not faithfully delivered by the adversary. Without loss of generality we may suppose $\bm{m_7}$ was not delivered faithfully - because in Game A11, all the secrets have been succesfully replaced by uniformly random values. If $\bm{m_7}$ is faithfully delivered then the adversary has acted passively in a game in which all the secrets are uniformly random values, and so its advantage is $0$.
  
  Consider an adversary $\advA8$ playing the $\EUFCMA$ security game with respect to a symmetric key $fk_I$, that runs $\advA$ playing Game A11 as a subroutine. We answer its oracle queries under $fk_I$. Recalling that $\advA$ controls all traffic; when it produces $IF$, we submit $(H_3, IF)$ as our forgery. We answer $\advA8's$ oracle queries with the key $fk_I$. Since $fk_I$ has been replaced with a uniform random value, the simulation is sound; and by the above we may assume that no honest origin session delivder $\bm{m_7}$. This is therefore a valid forgery, and so the probability that $B_0$ occurs is bounded by the advantage of $\advA8$ in the $\EUFCMA$ security game; that is,
  \[\Pr(B_0)\leq \Adv^{\EUFCMA}_{\advA8}(\secpar)\]

  Descending the chain of inequalities, we conclude that there are PPT adversaries $\advA1,\advA2$ and $\advA3$ such that
  \[\Adv^{A0}_{\advA}(\secpar)\leq 
  \left(
  \Adv^{\INDCCA}_{\advA1,\KEM_s}(\secpar) + 7\Adv^{\dPRF}_{\advA2,\prfeval}(\secpar) + \Adv^{\EUFCMA}_{\advA3,\MAC}(\secpar)
  \right)
  \]

  \paragraph*{Sub-subcase 1a'': $\pi_i^s$ is an initiator in stage $t$.}
  
  Suppose that the tested session is an initiator. We employ the same general argument; but now we note that the replacement of the initiator's secret $ss_I$ is no longer sound. Instead, notice that we can now replace $ss_R$ with a value sampled uniformly at random and again relate the advantage of the two resulting games by the advantage of and $\INDCCA$ adversary. In a version of the game where $ss_R$ is a uniformly random value not known to the adversary we may regard $\prfeval$ as pseudorandom in its first argument, so we can proceed by the same argument as in the initiator case and invoke the $\dPRF$ security of $\prfeval$, and replace $\MS$ with a value sampled uniformly at random. The proof is then the same as in the initiator case, and it follows that there are PPT adversaries $\advA1,\advA2,\advA3$ such that
  \[
  \Adv^{A'0}_{\advA}(\secpar) \leq \Adv^{\INDCCA}_{\advA1,\KEM_s}(\secpar) + 6\Adv^{\dPRF}_{\advA2}(\secpar) + \Adv^{\EUFCMA}_{\advA3,\MAC}(\secpar) 
  \]

  \paragraph*{Subcase 1b: no $\CompromiseSK(i,s,t)$ has been issued.}

  Having handled the subcase in which the long-term authentication secrets were not compromised, we may now consider the subcase in which no $\CompromiseSK(i,s,t)$ query was issued. In this case the proof is the same for a tested initiator and tested responder session. The idea is that if the \qkd{} link was not compromised we can consider $k_q$ to be a uniformly random value, after which we may proceed by similar arguments to the above.

  \item[Game A'0:] This is the same Game A'0 defined above.

  \item[Game B1:] This game is the same as the previous one, except we replace the chaining key $k_2$ with a value $\widetilde{k_2}$ sampled uniformly at random. Any values derived from $k_2$ are instead derived from $\widetilde{k_2}$. As discussed above, we regard $k_q$ as a value distributed uniformly at random, because the \qkd{} link is assumed to have survived. Since we are in the active case a $\dPRF$ adversary $\advB1$ can be assumed to have access to $k_1$; this adversary can therefore query its PRF oracle on $\ell_4\|k_1$, and run $\advA$ on Game $0$ with $k_2$ replaced with the oracle output. If the oracle was a PRF then we have exactly simulated Game $0$; otherwise we have exactly simulated $B1$, yielding

  \[
  \Adv^{A'_0}_{\advA}(\secpar) \leq \Adv^{B_1}_{\advA}(\secpar) + \Adv^{\dPRF}_{\advB1,\prfeval}(\secpar)
  \]

  \item[Game B2:] In this game we replace the chaining key $k_3$ with the a value sampled uniformly at random, say $\tilde{k_3}$. Any values derived from $k_3$ are instead derived from $\tilde{k_3}$. If $\SecState$ is known to the adversary (either because it is the first stage of the session, or because a $\CompromiseSS$ query was issued on the prior stage), a $\dPRF$ adversary $\advB2$ can query its oracle on $\SecState$ and run $\advA$ on Game $B2$ with $k_3$ replaced by the oracle output. Since $\prfeval$ is also a PRF on its first argument in this case, if the oracle output was from a PRF we have exactly simulated Game $B1$; otherwise we have exactly simulated $B2$, whence
  
  \[
  \Adv^{B_1}_{\advA}(\secpar) \leq \Adv^{B_2}_{\advA}(\secpar) + \Adv^{\dPRF}_{\advB2,\prfeval}(\secpar)
  \]

  Notice that if $\SecState$ was not known, the $\dPRF$ adversary can instead query its oracle on a value chosen uniformly at random, and the same argument applies.
  
  \item[Games B3-5:] $\IHTS,\RHTS$ and $\dHS$ are now computed as PRF evaluations on the uniformly random value $\tilde{k_3}$. By considering PRF adversaries $\advB3,\advB4$ and $\advB5$ running $\advA$ on games $B3,B4$ and $B5$ respectively, and arguing as above, one has

  \[
  \Adv^{B2}_{\advA}(\secpar) \leq \Adv^{B5}_{\advA}(\secpar) + 3\Adv^{\dPRF}_{\advB3}(\secpar)
  \]

  \item[Game B6:] In this game we replace $\AHS$ with a uniform random value, $\widetilde{AHS}$. Any values derived from $\AHS$ are instead derived from $\widetilde{AHS}$. Note that we are in the active adversary scenario so we may assume knowledge of $ss_I$. A PRF adversary $\advB6$ can therefore query its oracle on $\ell_9\|ss_I$ and run $\advA$ on Game $B5$ with $\AHS$ replaced by the oracle response. Since in this game $\dHS$ has been replaced with a uniform random value not known to the adversary, if the oracle was a PRF we have exactly simulated Game $B5$ to $\advA$; otherwise we have simulated Game $B6$, giving

  \[
  \Adv^{B5}_{\advA}(\secpar) \leq \Adv^{B6}_{\advA}(\secpar) + \Adv^{\dPRF}_{\advB6,\prfeval}(\secpar)
  \]

  \item[Games B7-9:] In these games $\AHS$ has been replaced by a uniformly random value. Following the usual argument, we define games $B7-B9$ by sequentially replacing $\IAHTS, \RAHTS$ and $\dAHS$, and consider PRF adversaries $\advB7-9$ running $\advA$ on these games. We obtain

  \[
  \Adv^{B6}_{\advA}(\secpar) \leq \Adv^{B9}_{\advA}(\secpar) + 3\Adv^{\dPRF}_{\advB7,\prfeval}(\secpar) 
  \]

  \item[Game B10:] In this game we replace $\MS$ with a uniform random value, say $\widetilde{MS}$. Any values derived from $\MS$ are instead derived from $\widetilde{MS}$. A PRF adversary $\advB8$ can query its PRF oracle on $\ell_13\|ss_R$ (where we may assume knowledge of $ss_R$ since we are we did not require that the long-term authentication mechanism survives) and run $\advA$ on Game $B9$ with $\MS$ replaced by the oracle response. Since $\dAHS$ has already been replaced by a uniformly random value not known to the adversary, if the oracle was a PRF we have exactly simulated Game $B9$ to $\advA$; otherwise, we have simulated Game $B10$. It follows that

  \[
  \Adv^{B9}_{\advA}(\secpar) \leq \Adv^{B10}_{\advA}(\secpar) + \Adv^{\dPRF}_{\advB8,\prfeval}(\secpar)
  \]

  We are now in exactly the same scenario as Game $A7$ and may proceed according to the same argument. We conclude that there are PPT adversaries $\advB1,\advB2$

  \[
  \Adv^{A_0}_{\advA}(\secpar) \leq 15\Adv^{\dPRF}_{\advB1,\prfeval}(\secpar) + \Adv^{\EUFCMA}_{\advB2,\MAC}(\secpar)
  \]
  \end{description}

  \paragraph*{Case 2: origin session exists.}

  We can now move on to the case where the tested stage $\pi_i^s$ has a matching session $\pi_j^r$ in stage $t$. As we have discussed, this can be thought of as the ``passive'' case. Here we only need one of the ephemeral post-quantum secret, the long-term post-quantum secret, the per-stage $\SecState$ (provided $t>1$), or the out-of-band \qkd{} link to survive. 

  Again, all the game-hops here happen from Game A'0 that we defined before, allowing us to argue about a single session being tested, at the cost of a polynomial factor in the final bound.

  In all of these cases we use exactly the same principles as above to show that if a secret is not compromised, we can set up a PRF game such that the advantage in this PRF game bounds the difference in advantage between the original game and a version of the original game whereby the secret is replaced with a uniformly random value. Notice also that in this matching case we do not have to worry about $\MAC$ forgeries. As such we eschew the details of the proof in these cases and present the following conclusions in terms of advantage. 

  If no $\CompromiseSK(i,s,t)$ or $\CompromiseSK(j,r,t)$ has been issued, we can (regardless of the tested stage's role) replace $ss_{pq}$ with a value chosen uniformly at random, $\widetilde{ss_{pq}}$. Since the adversary is passive, this is a sound replacement. Similarly to what we saw in the active case, we can use the advantage in a specially constructed $\INDCPA$ game to bound the difference between the advantage of the original key-indistinguishability game, and that of a variant of this game, in which the ephemeral post-quantum secret is uniformly random. From there we can sequentially replace secrets by relying on the $\dPRF$ security of $\prfeval$. Moreover, since we are in the passive case, we do not have to worry about $\MAC$ forgeries, so we will not need to invoke the `identical-until-bad' type argument of the previous cases. It follows that in this case, for a PPT adversary $\advA$ we have PPT adversaries $\advC1, \advC2$ such that 
  \[
  \Adv^{A_0}_{\advA}(\secpar) \leq \Adv^{\INDCPA}_{\advC1,\KEM_{pq}}(\secpar) + 18\Adv^{\dPRF}_{\advC2,\prfeval}(\secpar)
  \]
  
  If no $\CompromiseSK(i,s,t)$ or $\CompromiseSK(j,r,t)$ have been issued the proof goes through almost exactly the same as in the previous case, except we no longer have to bound the probability of a $\MAC$ forgery (since we are in the passive scenario). In this case we proceed entirely by relying on the $\dPRF$ security of $\prfeval$. In particular, for a PPT adversary $\advA$ we have a PPT adversary $\advD$ such that 

  \[
  \Adv^{A_0}_{\advA}(\secpar) \leq 15\Adv^{\dPRF}_{\advD,\prfeval}(\secpar)
  \]

  Finally, we notice that we can allow everything apart from the long-term authentication secrets to fail, because KEM authentication introduces new secrecy into the key schedule. (Compare this with signature authentication, in which a passively-observed transcript where the signature keys survive but ephemeral keys fail yields full key recovery). If the tested stage is a responder session, for a PPT adversary $\advA$ there are PPT adversaries $\advE1,\advE2$ such that
  \[
  \Adv^{A_0}_{\advA}(\secpar) \leq \Adv^{\INDCPA}_{\advE1,\KEM_s}(\secpar) + 9\Adv^{\dPRF}_{\advE2,\prfeval}(\secpar)
  \]

  If the tested stage was a responder the bound is
  \[
  \Adv^{A_0}_{\advA}(\secpar) \leq \Adv^{\INDCPA}_{\advE1,\KEM_s}(\secpar) + 5\Adv^{\dPRF}_{\advE2,\prfeval}(\secpar)
  \]

  Notice that in comparison with the active case, we now only need the $\INDCPA$ security of $\KEM_s$.
\end{proof}

\begin{remark}
    In contrast to \KEMTLS \cite{CCS:SchSteWig20}, we do not require the ephemeral $\KEM$s to have $\INDONECCA$ security (and instead only require $\INDCPA$ security). This is because the proof of \KEMTLS and our proof split into two cases slightly differently. In the proof of \KEMTLS, there is a case whereby an honest matching session exists, and a case in which an honest matching session does not exist, but in both scenarios the adversary is allowed to behave actively. There is no guarantee that the adversary faithfully delivers the encapsulation under the ephemeral $\KEM$, so for the security proof to go through, we may require a decapsulation query in order to replace the ephemeral secret in a valid fashion. In contrast, our proof divides into a case where the session we partner with has a corresponding session with a matching transcript, which implies that the adversary has acted passively. As such, we know that the encapsulation of the ephemeral secret has been delivered faithfully, and so we do not need to rely on a decapsulation query. In the case where there is no session with a matching transcript, the adversary must have behaved actively -- but in this case, the ephemeral KEMs do not offer any security anyway, since an active adversary can safely encapsulate their own secret against the initiator's public key (or send their own ephemeral public keys to a responder), and we rely on the authentication mechanism. In summary, in either of our cases, there is no need to upgrade the $\INDCPA$ security of the ephemeral $\KEM$ to $\INDONECCA$ security. Note that this is in line with the results from Huguenin-Dumittan and Vaudenay \cite{EC:HugVau22} where the authors show that CPA-secure ephemeral KEMs are essentially sufficient for TLS 1.3. Our work shows that such guarantees also hold for \MuckleSharp in the HAKE framework.
\end{remark}

\ifdefined\full
\section{Implementation and Evaluation}

We implemented a prototype of the $\KEMuckle$ protocol in Python using bindings\footnote{\url{https://github.com/open-quantum-safe/liboqs-python}} of \texttt{liboqs}~\cite{SAC:SteMos16} for the support of post-quantum signature schemes and KEMs as well as the \texttt{cryptography}\footnote{\url{https://pypi.org/project/cryptography/}} module for all classically-secure schemes. The initiator (and the responder) were executed on a notebook running Windows~10 with an Intel~i5 2.60GHz CPU and 8~GB of RAM reflecting a client (and server) view. 

\paragraph{About the implemented architecture.} The architecture of an application integrating the $\KEMuckle$ protocol is depicted in \Cref{fig:Muckle_Architecture}. The protocol is implemented in the application layer of a \qkd{} network. The quantum key material is fetched by the key managements services (KMSs) \cite{DBLP:conf/IEEEares/JamesLRT23,DBLP:conf/icton/MartinBOBVSSACSSEDRPL23,DBLP:journals/csur/MehicNRMPAMSPPV20} from the \qkd{} devices and is then provided via the ETSI GS QKD 014~\cite{ETSI014} interface to the application layer. With this interface, the initiator-side of the application obtains a key from the KMS together with an associated ID. For the responder to request the same key, the initiator needs to communicate the key ID to the responder which then requests the key by providing the key ID. Hence, this ID needs to be included in the initial message of the key exchange but no other information needs to be transmitted for the initiator and responder to use the ETSI GS QKD 014 interfaces.

To go into more detail regarding the additional communication cost, we want to note that our protocol requires one more message compared to TLS 1.3 from the initiator (i.e., the client in the TLS setting) to the responder (i.e., the server). The initiator/client is the first party to be able to send encrypted data instead of the responder/server. The server being able to send the first encrypted message is only useful for certain protocols such as SMTP, but not for HTTPS where the client sends the message. For client-initiated protocols, the additional message has no negative impact. The concrete trade-offs depend, of course, on the concrete choice of primitives. 

\begin{figure}[t]
  \centering
\begin{tikzpicture}
  \node(QKD1) [QKDmodule]{QKD};
  \node(QKD2) [QKDmodule, right of=QKD1, xshift=2cm]{QKD};
  \draw[arrowtwo] (QKD1) -- node[above] {\scriptsize QKD link} (QKD2);
  \node(QKD3) [QKDmodule, right of=QKD2, xshift=0cm]{QKD};
  \node(QKD4) [QKDmodule, right of=QKD3, xshift=2cm]{QKD};
  \draw[arrowtwo] (QKD3) -- node[above] {\scriptsize QKD link} (QKD4);

  \node(KM1) [KM, above of=QKD1, yshift= 0.3cm, xshift=-0.5cm]{KMS};
  \node(KM2) [KM, above of=QKD2, yshift= 0.3cm, xshift=0.5cm]{KMS};
  \draw[arrowtwo] (KM1) -- node[above] {\scriptsize KMS link} (KM2);
  \node(KM3) [KM, above of=QKD4, yshift= 0.3cm, xshift=0.5cm]{KMS};
  \draw[arrowtwo] (KM2) -- node[above] {\scriptsize KMS link} (KM3);

  \draw [arrow] (QKD1) -- (QKD1 |- KM1.south);
  \draw [arrow] (QKD2) -- (QKD2 |- KM2.south);
  \draw [arrow] (QKD3) -- (QKD3 |- KM2.south);
  \draw [arrow] (QKD4) -- (QKD4 |- KM3.south);

  \node(app1) [app, above of=KM1, yshift=1.8cm] {Initiator Appl.};
  \node(app2) [app, above of=KM3, yshift=1.8cm] {Responder Appl.};

  \node(muckle) [rectangle, draw=blue!80, dashed, above of=KM2, yshift=1.6cm, align=center, minimum height=2cm, minimum width=11cm] {~ \\ ~ \\ ~ \\ \color{blue!80}{$\KEMuckle$}};

  \draw [arrow] (KM1) -- node [left, yshift=0.6cm] {\footnotesize $k_q$} (KM1 |- app1.south);
  \draw [arrow] (KM3) -- node [right, yshift=0.6cm] {\footnotesize $k_q$} (KM3 |- app2.south);

  \draw [arrowtwo] (0.5,3.9) -- node[above] {\footnotesize $k_c$} (6.35,3.9);
  \draw [arrowtwo] (0.5,4.4) -- node[above] {\footnotesize $k_{pq}$} (6.35,4.4);

  \draw[dashed, black!60] (-2,2.1) -- node [above] {\footnotesize ETSI GS QKD 014} (9,2.1);

  \node(node1) [node, fit=(QKD1) (KM1)]{};
  \node(node2) [node, fit=(QKD2) (KM2) (QKD3)]{};
  \node(node3) [node, fit=(QKD4) (KM3)]{};
  \node(text1) [desc, below of=node1, yshift=-0.3cm]{Initiator node};
  \node(text2) [desc, below of=node2, yshift=-0.3cm]{Trusted node};
  \node(text3) [desc, below of=node3, yshift=-0.3cm]{Responder node};
\end{tikzpicture}
\caption{Architecture of the $\KEMuckle$ implementation. $k_q$ represents the \qkd{} key, $k_{pq}$ the key obtained from a post-quantum secure KEM, and $k_c$ the key obtained from a classically secure KEM. KMS represents the key management system. Note that the protocol supports any number of trusted nodes in between. The architecture is adapted from  $\SigMuckle$ from~\cite{PQCRYPTO:BruRamStr23}.}
\label{fig:Muckle_Architecture}
\end{figure}

\paragraph{Further requirements and assumptions.} We evaluated the implementation of $\KEMuckle$ with a simulated link with a single post-quantum secure certificate authority. We used simulators instead of actual \qkd{} devices to ensure that the limited key rate does not influence the practical results (a \qkd{} key has 32 bytes). The benchmark was performed using mutual authentication and hence both parties authenticated themselves using certificates. 
The certificates contained both post-quantum and classical long-term public keys. They were also authenticated in a two-layer certificate hierarchy, i.e., with one root CA and one intermediate CA. The CA signed the certificates using both a post-quantum signature scheme (ML-DSA-87) and a classical signature scheme, namely EdDSA~\cite{CHES:BDLSY11}.





\paragraph{Concrete analysis.} The performance is evaluated based on two critical metrics:  bandwidth and CPU cycles. The following analysis compares these metrics for different schemes as depicted in~\Cref{tab:kemuckle}.

\begin{table}[htbp]
\centering
\caption{Bandwidth and CPU usage in kilobytes and giga cycles, respectively, for \KEMuckle over 5 runs. We use ML-DSA-87 for the post-quantum signature scheme of the long-term KEM keys. (Unfortunately, we were not able to measure cycles for FrodoKEM-976-SHAKE due to an error in a software library.)}
\label{tab:kemuckle}
\resizebox{\textwidth}{!}{
\begin{tabular}{llcc}
\toprule
Ephemeral Keys & Long-Term Keys & Cycles (G) & Data (KB) \\
\midrule
\multirow{1}{*}{\shortstack{ML-KEM-512, ECDH (X25519), QKD}} 
  & ML-KEM-512 & 1.6 & 29.2 \\
\midrule
\multirow{1}{*}{\shortstack{ML-KEM-768, ECDH (X25519), QKD}}  
  & ML-KEM-768 & 1.7 & 31.3 \\
\midrule
\multirow{1}{*}{\shortstack{ML-KEM-1024, ECDH (X25519), QKD}} 
  & ML-KEM-1024 & 1.7 & 33.9 \\
\midrule
\multirow{1}{*}{\shortstack{HQC-128, ECDH (X25519), QKD}} 
  & HQC-128 & 2.9 & 44.5 \\
\midrule
\multirow{1}{*}{\shortstack{HQC-256, ECDH (X25519), QKD}} 
  & HQC-256 & 6.0 & 89.4 \\
\midrule
\multirow{1}{*}{\shortstack{FrodoKEM-640-SHAKE, ECDH (X25519), QKD}} 
  & FrodoKEM-640-SHAKE & 1.8 & 82.5 \\
\midrule
\multirow{1}{*}{\shortstack{FrodoKEM-976-SHAKE, ECDH (X25519), QKD}}
  & FrodoKEM-976-SHAKE & - & 118.6 \\
\midrule
\multirow{1}{*}{\shortstack{FrodoKEM-1344-SHAKE, ECDH (X25519), QKD}}
  & FrodoKEM-1344-SHAKE& 2.4 & 153.9 \\
\bottomrule
\end{tabular}
}
\end{table}

The highest bandwidth is observed using FrodoKEM-1344-SHAKE for long-term keys at 153.9 KB, followed by FrodoKEM-976-SHAKE at 118.6 KB. ML-KEM-512 and ML-KEM-768 exhibit the lowest bandwidth at 29.2 KB and 31.3 KB, respectively. The schemes HQC-128 and HQC-256 have moderate bandwidth values of 44.5 KB and 89.4 KB, respectively.

The computational demand in terms of CPU cycles is as follows over 5 runs. HQC-256 for long-term keys requires the highest CPU cycles at 6.0 G cycles. HQC-128 and FrodoKEM-1344-SHAKE have moderate values of 2.9 G cycles and 2.4 G cycles, respectively. ML-KEM-512, ML-KEM-768 and ML-KEM-1024 are the most efficient with CPU cycles of 1.6, 1.7 and 1.7 G cycles, respectively, while FrodKEM-640-SHAKE has 1.8 G cycles.

\paragraph{Conclusion.} Using Schemes such as ML-KEM-768 and ML-KEM-1024 in \KEMuckle as long-term keys for authentication are efficient in both bandwidth and CPU cycles while also guaranteeing a fairly suitable level of security. FrodoKEM-1344-SHAKE offers higher security but at the cost of increased bandwidth and computational requirements. Nevertheless, the choice of the scheme depends on the specific requirements of bandwidth efficiency versus computational performance in the specific post-quantum cryptographic applications. 



\begin{remark}
Generally note, though, the major bottleneck is the key rate of the underlying \qkd{} network. With key rates in the range of a few kilobits per second,\footnote{For example see \url{https://www.thinkquantum.com/quky/} for practical key rates of a device supplier and \cite{DBLP:conf/IEEEares/DoberlEGKKR23}.} the cost of all involved cryptographic primitives and the cost of the transmission is insignificant. Nevertheless, since \qkd{} is modelled in a black-box way in this work, one can expect the efficiency gains of the post-quantum primitives to be increasingly germane as the relevant technology improves.

Moreover, note that $\KEMuckle$ can also benefit from the pre-distributed certificates~\cite{ESORICS:SchSteWig21}. As the certificates are a major contributor to the size of the handshake messages,\footnote{A single ML-DSA-87 signature is 4627 bytes large, so the two signatures required for a setting with a root CA and an intermediate CA account for almost 9 KB.} pre-distribution of the certificates has a significant impact on the communication overhead. Considering the bandwidth cost measured during a single protocol run (c.f. \Cref{tab:kemuckle}), the certificates are the single largest contributor to the required bandwidth. Considering current efforts to standardize Encrypted Client Hello for TLS~\cite{encryted-client-hello}, out-of-band communication of certificates will become a practical possibility for TLS and other protocols to save communication overhead. This proposal foresees the distribution of certificates via DNS records which will cause other challenges to be tackled with respect to post-quantum signatures, e.g., to the limited sizes of DNS records~\cite{DBLP:conf/space/RawatJ23}.
\end{remark}

\fi

\section{Conclusion and Outlook}

We propose a novel variant of hybrid authenticated key exchange protocols, in which authentication solely relies on the availability of post-quantum KEMs, instead of digital signatures or pre-shared keys; and prove its security in the HAKE framework. \ifdefined\full\else Inspired by KEMTLS~\cite{CCS:SchSteWig20}, our novel HAKE variant significantly improves upon efficiency compared to prior work. \fi In terms of future direction, we echo the sentiment in the conclusion of \cite{PQCRYPTO:DowHanPat20}; that is, it would be interesting to integrate the physical models used to prove \qkd{} into our framework, particularly with respect to the limitations set out in Section~\ref{sec:modelling-qkd}.

\ifdefined\full
\paragraph{Acknowledgements.} This work received funding from the European Union's Horizon Europe research and innovation programme under agreement number 101114043 (``QSNP'') and from the Digital Europe Program under grant agreement number 101091642 (``QCI-CAT''). Authors CB and LP conducted this work with the support of ONR Grant 62909-24-1-2002. LP also thanks Google for supporting academic post-quantum research. Authors CB, CS, and LP acknowledge support of the Institut Henri Poincaré (UAR 839 CNRS-Sorbonne Université), and LabEx CARMIN (ANR-10-LABX-59-01).
\else
\fi

\ifdefined\full
\bibliographystyle{alphaurl}
\else
\fi
\bibliography{bib,cryptobib/abbrev3,cryptobib/crypto}


\end{document}